\documentclass[letterpaper, 10pt, conference]{ieeeconf}  

\IEEEoverridecommandlockouts                              
\usepackage{threeparttable}
\usepackage{mathtools,dsfont,amsfonts}
\usepackage{microtype}
\usepackage[mathscr]{euscript}

\usepackage[amsmath,thmmarks]{ntheorem}
\allowdisplaybreaks
\DeclareMathOperator*{\argmax}{arg\,max}

\usepackage{xcolor}

\newtheorem{problem}{Problem}

\newtheorem{assumption}{Assumption}

\theorembodyfont{\upshape}
\usepackage{hyperref}
\hypersetup{
    colorlinks=true,
    linkcolor=blue,
    filecolor=magenta,      
    urlcolor=cyan,
    citecolor=Green,
    }
    
\usepackage{booktabs}
\usepackage{graphicx}
\usepackage[sort,compress]{cite}

\newcommand*\obe{\omega}
\newcommand*\budcap{\xi}
\newcommand*\bufcap{\beta}

\usepackage{algorithm}%
\usepackage{algorithmicx}%
\usepackage{algpseudocode}%

\usepackage{soul}
\setstcolor{red}

\begin{document}
\title{Action Recommendations for Sequentially Rational Strategic Agents}
\author{Renyan Sun and Ashutosh Nayyar
\thanks{Renyan Sun and Ashutosh Nayyar are with the Department of Electrical and Computer Engineering, University of Southern California, Los Angeles, CA, USA (email: renyansu@usc.edu, ashutosn@usc.edu). This work was supported in part under {NSF} grants ECCS 2025732 and  IIS 2448885.}%
}
\maketitle
\begin{abstract}
 We consider a finite-horizon discrete-time dynamic system that is jointly controlled by two strategic agents. There is a system designer that has its own reward function but does not have direct control over the agents' actions. We consider an information structure where the current state and all past history are equally accessible by the designer and the agents. The designer sends  action recommendations to the agents at each time step. Each  agent can use the received recommendation and the available  information to choose its action. We are interested in the setting where the designer would like to send recommendations in a way that incentivizes the agents  to adopt  obedient strategies, i.e., to take the action recommended by the designer. Our goal is to find an optimal action recommendation strategy for the designer that maximizes the designer's objective while  ensuring that obedient strategies are \emph{sequentially rational} for the agents. We provide an algorithm for the designer's problem that involves solving a family of linear programs in a backward inductive manner.  
\end{abstract}
\section{Introduction}
Consider a Markov decision process (MDP) jointly controlled by multiple agents. Each agent is strategic (i.e. interested in optimizing its own objective) and can take actions that influence the evolution of the underlying system. A fundamental solution concept in these settings is the Nash equilibrium---a strategy profile (i.e. a strategy for each agent) where no agent profits by unilaterally changing its action \cite{osborne1994course}. However, since agents are self-interested, outcomes arising at Nash equilibrium may not align with the preferences of a system designer or with social welfare objectives. One way a designer can influence agents is through \emph{information design}, i.e., by selectively revealing some information (that only the designer knows) to the agents (see \cite{bergemann2019information} and references therein). However, in environments with complete  information, all agents and the designer have the same information about the system, leaving the designer with no informational advantage. The designer can still achieve some coordination among the agents by sending action recommendations or messages to them.  This is a key idea behind the concept of correlated equilibrium \cite{aumann1987correlated, osborne1994course}, \cite[Chapter 8]{maschler2020game}. 

Much of the foundational literature on correlated equilibrium has focused on \emph{static} games \cite{brandenburger1987rationalizability, hart1989existence, gilboa1989nash, dhillon1996perfect, moreno1996coalition, papadimitriou2008computing}. In \emph{dynamic} games, however, coordinating behavior is considerably more challenging because the state of the system evolves over time and agents need to reassess their future expected payoffs. Since the designer cannot directly control agents' actions, the action recommendation/messaging strategies must be \emph{incentive compatible}\cite{bergemann2019information}. In particular, the actions recommended by the designer must be \emph{sequentially rational}, i.e, at each time and for each possible  realization of the the agents' information, no agent can benefit from unilaterally deviating from the  designer's recommended  action (see Definition~\ref{def:sequential rationality} for details). This requirement is stronger than standard Nash equilibrium-based incentive compatibility which does not account for the sequential nature of the problem.
In the game theory literature, coordinating strategic behavior under these strict dynamic constraints is formalized as an extensive-form correlated equilibrium \cite{myerson1986multistage, forges1986approach, von2008extensive, huang2008computing, farina2019correlation}. While the algorithmic literature for computing these equilibria relies heavily on finite tree-form representations, we tackle this problem directly within an MDP-based dynamic game (i.e. Markov game) setting.

In our MDP-based system model, the designer's problem can be described as follows: the designer would like to find an action recommendation strategy that maximizes the designer's total expected reward over a finite time horizon subject to agents' sequential rationality constraints.  Conventionally, MDPs with additional operational  requirements are modeled as Constrained MDPs (CMDPs)\cite{altman2021constrained, borkar2005actor}. In standard CMDPs, a constraint is of the form that the total expected value of a constraint reward function should exceed a certain threshold. However, this formulation is not suitable for capturing sequential rationality constraints that need to hold at \emph{each time instant} and for  \emph{each information realization}. Such constraints are stricted than the CMDP constraints that only look at agents' total expected reward over the entire time horizon.


In this paper, we consider a dynamic setting with one designer and two agents (agent $1$ and agent $2$) operating under complete symmetric information where the current state and all past history are fully accessible to everyone. At each time step, the designer sends a recommended action pair/message pair to the agents. The agents use these recommendations and the available information to select their actions. The agents' actions influence the evolution of the underlying dynamic system and determine the agents' and the designer's rewards. We investigate a scenario where the designer aims to incentivize the agents  to adopt obedient strategies, i.e, to take the action recommended by the designer. Our goal is to find an optimal action recommendation/messaging strategy for the designer that maximizes the designer's objective while  ensuring that obedient strategies are \emph{sequentially rational} for the agents.


Our approach proceeds as follows. We first restrict our attention to Markovian messaging strategies. For a fixed Markovian strategy, we analyze sequential rationality to establish \emph{necessary and sufficient} conditions under which obedient strategies are sequentially rational.
This leads to a reformulation of the designer's strategy optimization problem. We develop an algorithm to solve this optimization problem by solving a family of linear programs in a backward inductive manner. Our methodology is similar to the one used   in \cite{sun2025optimal, sun2025optimalarXiv} for an information design problem.  Finally, we argue that the restriction to Markovian strategies for the designer is without loss of optimality. 


\emph{Notation:}
Random variables are denoted by upper case letters; their realizations by corresponding lower case letters. All random variables take values from finite sets which are denoted in the calligraphic font of the corresponding upper case letter. For time indices $t_1\leq t_2$, ${X}_{t_1:t_2}$ is a short hand notation for the variables $({X}_{t_1},{X}_{t_1+1},...,{X}_{t_2})$. Similarly, ${X}^{0:2}$ is the short hand notation for the collection of variables $({X}^0, {X}^1, {X}^2)$. $\mathbb{P}(\cdot)$ denotes probability; $\mathbb{E}[\cdot]$ denotes expectation. $\mathbb{P}^g(\cdot)$ ($\mathbb{E}^g[\cdot]$) denotes that the probability (resp. expectation) depends on the choice of  function(s)

\section{Model and Problem Formulation}\label{sec:model1}
We consider a discrete-time dynamic system that is jointly controlled by two agents\footnote{Our approach naturally extends to more than 2 agents.}. For each time $t\in\{1,2,\ldots,T\}$, $X_t \in \mathcal{X}_t$ is the state of the system at time $t$,  $U_t^i \in \mathcal{U}^i_t $ is the agent $i$'s action  at time $t$. The system evolves as follows
\begin{equation}\label{eq:dyn}
    X_{t+1} = f_t(X_t, U_t^1, U_t^2, N_t),
\end{equation}
where $N_t \in \mathcal{N}_t$ is the noise in the dynamic system at time $t$. The agents are strategic and interested in optimizing their own total expected rewards (which will be specified later in this section).

In addition to the two agents, there is a designer present who is interested in coordinating and influencing the agents' behavior. At each time $t$, a designer sends a 
recommended action (also referred to as a message) $M^i_t \in \mathcal{U}^i_t$ to  agent $i$. The agent is free to either follow this recommended action or take a different action. We describe the designer's and the agents' decision-making in more detail below. 

\emph{Information Structure and Strategies:} The designer, agent 1 and agent 2 at time $t$ have access to the same information $C_t$ which comprises the current state of the system and the history of all past states, actions and messages, i.e., 
\begin{equation}\label{eq:info structure}
    C_t = \{X_{1:t}, U_{1:t-1}^{1,2}, M_{1:t-1}^{1,2}\}.
\end{equation}
 Let $\mathcal{C}_t$ denote the space of all possible realizations of $C_t$.

At time $t$, the designer determines a \emph{recommended action pair/message pair} $(M_t^1, M_t^2) \in \mathcal{U}_t^1 \times \mathcal{U}_t^2$ by sampling from a probability distribution $D^m_t$ on $\mathcal U_t^1 \times \mathcal U_t^2$. This distribution is chosen by the designer as a function of $C_t$, i.e.,
\begin{equation}\label{eq:messaging strategy}
  (M^1_t,M^2_t) \sim D_t^m, \text{~~and~~} D_t^m = g_t^m(C_t).
\end{equation}
The function $g^m_t$ is referred to as the \emph{designer's messaging strategy at time $t$}.
 We call the collection $g^m:=(g^m_1,g^m_2,\ldots,g^m_T)$ the designer's messaging strategy. Let $\mathcal{G}^m$ denote the set of all possible messaging strategies for the designer. For notational convenience, $g^m_t(m^1_t, m^2_t|c_t)$ denotes the \emph{probability} of generating the message pair $m^{1}_t,m^2_t$ given the realization of the information $c_t$ under the strategy $g^m_t$ at time $t$.

 The designer sends message $M^i_t$ to agent $i$. After agent $i$ receives the message $M_t^i$ from the designer, it generates an action as a function of its information and the message it received at time $t$, i.e., 
\begin{equation}\label{eq:agent_g}
    U_t^i = g_t^i(M_t^i, C_t),
\end{equation}
where  $g_t^i$ is the agent $i$'s action strategy  at time $t$. The collection $g^i := (g^i_1,g^i_2,...,g^i_T)$ is referred to as the \emph{agent $i$'s action strategy}. Let $\mathcal{G}^i$ denote the set of all possible action strategies for agent $i$. The strategy triplet  $g:=(g^m, g^1, g^2)$, is called the  \emph{strategy profile}. $(g^m, g^1, g^2)_{t:T}$ denotes the strategies used from time $t$ to $T$. 

\emph{Obedient strategies:} A particular choice of agent strategies that we will be interested in are the so-called obedient strategies. These are strategies where the agent simply takes the action recommended by the designer. We denote an obedient strategy for agent $i$ be $\obe^i$, i.e,
\begin{equation}\label{eq:obe_1}
    U_t^i = \obe_t^{i}(M_t^i, C_t) = M^i_t.
\end{equation}

{We assume that  the initial state $X_1$ and  the noise variables $N_t,  t=1,2,\ldots,T,$ are finite-valued, mutually independent random variables with the distribution of $X_1$ being $P_{X_1}$ and the distribution of $N_t$ being $Q_t$. Further, all system variables (i.e., states, actions, messages, information, etc.) take values in finite sets.}

\emph{Reward structure:} At each time $t$, agent $i$ receives a reward $r^i_t(X_t, U_t^1,U_t^2)$ and the designer receives a reward $r^0_t(X_t, U_t^1,U_t^2)$. The total expected reward for agent $i$ under the strategy profile $g:= (g^m, g^1, g^2)$ is given as:
\begin{equation}\label{eq:agent reward-to-go}
    J^i(g^m, g^1, g^2) := \mathbb{E}^{g}\left[\sum_{t=1}^T r^i_t(X_t, U_t^1,U_t^2)\right].
\end{equation}
Similarly, the designer's total expected reward under the strategy profile $g = (g^m, g^0, g^1)$ is given  as:
\begin{equation}\label{eq:designer reward-to-go}
    J^0(g^m, g^1, g^2) := \mathbb{E}^{g}\left[\sum_{t=1}^T r^0_t(X_t, U_t^1,U_t^2)\right].
\end{equation}

\emph{Incentive compatibility for the agent (i.e., Sequential Rationality of Obedient Strategies):} A minimal requirement for each agent $i$ to 
adopt the obedient strategy $\obe^i$ is that it yields a total expected reward no less than that of any other strategy. We will adopt a stronger notion of incentive compatibility based on sequential rationality \cite[Chapter 12]{osborne1994course}. This   requires that  $\obe^i$ is optimal for agent $i$ at every time step and for every information realization. We formalize this in the definition below.

\begin{definition}\label{def:sequential rationality}
   We say that the obedient strategies $\obe^1, \obe^2$ satisfy sequential rationality (SR) with respect to the designer messaging strategy $g^m$  if the following is true\footnote{We use $-i$ to indicate all agents except agent $i$ or the designer.}:\\
    {For each agent $i$, each time $t$ and each possible realization $c_t = (x_{1:t}, u^{1,2}_{1:t-1}, m^{1,2}_{1:t-1})$ of information at time $t$,}
    \begin{align}
    &\mathbb{E}^{(g^m, \obe^i, \obe^{-i})_{t:T}}\left[\sum_{k=t}^T r^i_k(X_k, U_k^{1,2}) \Bigm| c_t\right] \notag \\
    &\geq \mathbb{E}^{(g^m, g^i, \obe^{-i})_{t:T}}\left[\sum_{k=t}^T r^i_k(X_k, U_k^{1,2}) \Bigm| c_t\right]~\forall g^i \in  \mathcal{G}^i. \label{eq:sr}
     \end{align}
\end{definition}

The expectation on the left hand side of \eqref{eq:sr} is to be interpreted as follows: Given $c_t$, and with future states, actions, messages and information variables generated using strategies $(g^m, \obe^i, \obe^{-i})_{t:T}$, the  left hand side of \eqref{eq:sr} is the expected reward-to-go for  agent $i$. A similar interpretation holds for the right hand side of \eqref{eq:sr}.     
If $\obe^1,\obe^2$ satisfy sequential rationality (SR) with respect to the designer messaging strategy $g^m$, we will denote this by $(\obe^1,\obe^2)$ satisfies $SR(g^m)$.

We state the designer's strategy optimization problem below.
\begin{problem}\label{prob:no comp}
 The designer's goal is to find an optimal messaging strategy $g^m$ that maximizes the designer's total expected reward while ensuring  that the \emph{obedient strategies} $(\obe^1, \obe^2)$ satisfy sequential rationality with respect to $g^m$ as per Definition \ref{def:sequential rationality}. That is, 
    the designer would like to solve the following strategy optimization problem:
 \begin{align*}
    &\max_{g^m \in \mathcal{G}^m} J^0(g^m, \obe^1, \obe^2) \\
    &\hspace{10pt}\text{s.t. $(\obe^1,\obe^2)$ satisfies $SR(g^m)$.}
    \end{align*}
\end{problem}

 \section{Restriction to Markovian Designer Strategies}\label{sec:SA-Markovian}
 One challenge in solving Problem \ref{prob:no comp} is that the space  of possible designer strategies at time $t$ increases doubly exponentially with time. This is because the number of possible realizations of $c_t$ increases exponentially with time and the strategy $g^m_t$ needs to specify a probability distribution on message pairs for each realization of $c_t$. We therefore first consider the case where the designer can only use \emph{Markovian} strategies, i.e., strategies where $g^m_t$ is a function of the current state $X_t$ only. This restriction is motivated by the well-known result that Markovian strategies are  optimal for Markov Decision Processes (MDP) \cite{kumar2015stochastic}. Moreover, we will show  in Section~\ref{sec:SA-General} that restriction to Markovian strategies for the designer is without loss of optimality.

 Specifically, we assume in this section that the designer operates as follows: at time $t$, the designer determines a \emph{recommended action pair/message pair} $(M_t^1, M_t^2) \in \mathcal{U}_t^1 \times \mathcal{U}_t^2$ by sampling from a probability distribution $D^m_t$ on $\mathcal U_t^1 \times \mathcal U_t^2$ which is chosen by the designer as a function of $X_t$, i.e.,
\begin{equation}
  (M^1_t,M^2_t) \sim D_t^m, \text{~~and~~} D_t^m = g_t^m(X_t).
\end{equation}  We denote the set of Markovian messaging strategies by $\mathcal{G}^m_{Markov}$.
$g^m_t(m^1_t, m^2_t|x_t)$ denotes the probability of generating the messages $m^{1,2}_t$ given the realization of the state   $x_t$ under the strategy $g^m_t$ at time $t$. Further, we denote by $g_t^m(m_t^i | x_t)$ the probability of message $m^i_t$ given the realization of the state   $x_t$ under the strategy $g^m_t$. \big($g^m_t(m_t^i | x_t)$ can be obtained by summing $g^m_t(m^i_t,m^{-i}_t|x_t)$ over all possible $m_t^{-i}$ \big). We now have the following problem formulation with Markovian designer strategies.

\begin{problem}\label{prob:no comp markov}
 The designer's goal is to find an optimal Markovian messaging strategy $g^m$ that maximizes the designer's total expected reward while ensuring  that the \emph{obedient strategies} $(\obe^1, \obe^2)$ satisfies sequential rationality with respect to $g^m$ as per Definition \ref{def:sequential rationality}. That is, 
    the designer would like to solve the following strategy optimization problem:
 \begin{align*}
    &\max_{g^m \in \mathcal{G}^m_{Markov}} J^0(g^m, \obe^1, \obe^2) \\
    &\hspace{10pt}\text{s.t. $(\obe^1,\obe^2)$ satisfies $SR(g^m)$.}
    \end{align*}
\end{problem}

Since the $SR$ conditions of \eqref{eq:sr} must hold at each time and for each possible realization of information, our solution approach begins by characterizing $SR$ conditions through backward induction. This will allow us to decompose the optimization in Problem \ref{prob:no comp markov} in a sequential manner.

\subsection{Reformulation of the Constraint in Problem \ref{prob:no comp markov}}\label{subsec:constraints markov}
Let $g^m$ be an arbitrarily fixed Markovian designer  strategy. In this section, we reformulate the condition $(\obe^1, \obe^2)$ satisfies $SR(g^m)$ using linear inequalities involving $g^m$. To achieve this, we first recursively define the  following value functions for agent $i$ $(i = 1,2)$:
\begin{equation}
    W_{T+1}^i(x_{T+1}) := 0, \label{eq:W_defa}
\end{equation}
and for $t \le T,$
\begin{align}
    W_t^i(x_t) &:= \mathbb{E}[r_t^i(x_t, M_t^1, M_t^2)\notag \\
    &\hspace{20pt}+ W_{t+1}^i(f_t(x_t, M_t^1, M_t^2, N_t))|X_t=x_t].
    \label{eq:W_def}
\end{align}
The expectation in \eqref{eq:W_def} is with respect to the probability distribution of $M_t^1, M_t^2$ under $g^m_t$ and the probability distribution of   $N_t$ conditioned on $X_t = x_t$. More explicitly, $W_t^i(x_t)$ can be written as the following expression:
\begin{align}\label{eq:agent val func}
    &W_t^i(x_t) = \sum_{\substack{m^1 \in \mathcal U_t^1\\ m^2 \in \mathcal U_t^2, n \in \mathcal{N}_t}}
Q_t(n)g^m_t(m^1, m^2 | x_t) \notag \\
&\times\left[r_t^i(x_t, m^1, m^2)+ W_{t+1}^i(f_t(x_t, m^1, m^2, n))\right],
\end{align}
Using backward induction, one can verify that $W_t^i(x_t)$ is the left hand side of \eqref{eq:sr} in Definition \ref{def:sequential rationality} for all $c_t = (x_t,\cdot)$ 

Consider a message $m^i_t$ such that $g^m_t(m^i_t|x_t) >0$. We define the following probability distribution on $M_t^{-i}, N_t$ given $m^i_t$ and $x_t$:
\begin{equation}\label{eq:mu}
    \mu_t^i(m_t^{-i}, n_t| m_t^i, x_t) =\frac{Q_t(n_t)g^m_t(m_t^1, m_t^2|x_t)}{g^m_t(m_t^i|x_t)}.
    \end{equation}

We first present a lemma that establishes necessary and sufficient conditions for $(\obe^1,\obe^2)$ to satisfy $SR(g^m)$.
\begin{lemma}\label{lemma:sr equiv no comp}
    $(\obe^1,\obe^2)$ satisfies $SR(g^m)$ if and only if the following statement is true for each $i=1,2, j \neq i$,   for  each $t$, and  for all $x_t \in \mathcal{X}_t, m_t^i \in \mathcal U_t^i$ such that $g_t^m(m_t^i | x_t) > 0$:
    \begin{align}
         &m_t^i \in \argmax_{u \in \mathcal{U}_t^i} \mathbb{E}^{\mu^i_t(\cdot|m^i_t,x_t)}[r_t^i(x_t, M_t^{j}, u)  \notag \\
            &\hspace{1.2in}+W^i_{t+1}(f_t(x_t, M_t^{j}, u, N_t))], \label{eq:sr equiv no comp}
        \end{align}  
        where the expectation is with respect to the probability distribution defined in \eqref{eq:mu}.
\end{lemma}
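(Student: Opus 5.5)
Both directions reduce sequential rationality to a one-step deviation check, using the value functions $W^i_t$ defined in \eqref{eq:W_def}. Fix agent $i$ with opponent $j\neq i$, and fix the Markovian $g^m$; agent $j$ plays $\obe^j$. Because $g^m$ and $\obe^j$ depend only on the current state and messages, the best-response problem of agent $i$ is a finite-horizon MDP. Its state at time $t$ is $(X_t,M^i_t)$, and its kernel is built from $g^m$, $Q_t$ and $f_t$.

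The first step is to compute, for this MDP, the optimal reward-to-go conditioned on $c_t$ and on the received message $m^i_t$. Suppose the agent deviates only at time $t$, taking action $u$, and is obedient afterwards. Then $(M^j_t,N_t)$ has conditional law $\mu^i_t(\cdot\mid m^i_t,x_t)$ from \eqref{eq:mu}. The reason is that, under a Markovian $g^m$, messages are independent of the past given $x_t$, and $N_t$ is independent of everything before it. Applying the remark after \eqref{eq:agent val func} at time $t+1$, obedience from $t+1$ on yields $W^i_{t+1}(X_{t+1})$ regardless of $c_{t+1}$. So the one-shot deviation value is exactly the objective $h^i_t(u;m^i_t,x_t)$ appearing in \eqref{eq:sr equiv no comp}. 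Averaging the obedient value $h^i_t(m^i_t;m^i_t,x_t)$ over $m^i_t\sim g^m_t(\cdot\mid x_t)$ gives $W^i_t(x_t)$, which follows by comparing \eqref{eq:mu} with \eqref{eq:agent val func}.

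\emph{Sufficiency.} Assume \eqref{eq:sr equiv no comp} holds everywhere. By backward induction on $t$, I will show that for every $c_t$, every $m^i_t$ with $g^m_t(m^i_t\mid x_t)>0$, and every $g^i$, the reward-to-go of $g^i$ from $(c_t,m^i_t)$ is at most $h^i_t(m^i_t;m^i_t,x_t)$. The inductive step has three parts. First, given $(c_t,m^i_t)$, the strategy $g^i$ picks some action $u=g^i_t(m^i_t,c_t)$. Second, by the induction hypothesis, its continuation value from each $c_{t+1}$ is at most the expectation over $m^i_{t+1}$ of the obedient values, which equals $W^i_{t+1}(x_{t+1})$. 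Third, the resulting bound is $h^i_t(u;\cdot)\le h^i_t(m^i_t;\cdot)$ by \eqref{eq:sr equiv no comp}. Taking expectation over $m^i_t$ then gives \eqref{eq:sr} at $c_t$, with left side $W^i_t(x_t)$.

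\emph{Necessity.} Suppose \eqref{eq:sr equiv no comp} fails at some $(t,x_t,m^i_t)$, with a strictly better action $u^*$. Take any $c_t$ with last state $x_t$. Define $g^i$ to play $u^*$ at time $t$ when it receives $m^i_t$ and to be obedient otherwise. In the decomposition of the right side of \eqref{eq:sr} as $\sum_{m}g^m_t(m\mid x_t)\,h^i_t(\cdot)$, this changes exactly one term, and that term has positive weight. Hence the right side strictly exceeds $W^i_t(x_t)$, which contradicts SR.

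The main obstacle is a subtle point about the realizations $c_t$ in Definition~\ref{def:sequential rationality}. These include realizations off the obedient path, or even of zero probability, for example when past actions differed from past messages. The argument must confirm that conditioning on such a $c_t$ is still well defined under the forward dynamics. It must also confirm that the continuation under obedience depends on $c_t$ only through $x_t$. Both facts follow from the Markovian form of $g^m$ and $\obe^j$ together with the independence of the noises. I would state this as a preliminary claim, proved by induction on $t$ from \eqref{eq:dyn}. I would also note that the conditioning in \eqref{eq:mu} is well defined precisely because the lemma restricts to messages with $g^m_t(m^i_t\mid x_t)>0$. Messages with zero probability contribute nothing to the expectation, so they impose no constraint.
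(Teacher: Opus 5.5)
Your proposal is correct and follows essentially the same route as the paper: backward induction for sufficiency, bounding the continuation by $W^i_{t+1}$ and comparing the current step under $\mu^i_t$, and, for necessity, a single-point deviation that changes exactly one positive-weight term. Your necessity argument is slightly leaner than the paper's: you deviate directly at the failing $(t,x_t,m^i_t)$, using that obedience from $t+1$ on yields $W^i_{t+1}$ regardless of whether the condition holds, so the paper's device of choosing the latest failing time $\tau$ is not needed.
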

\begin{proof}
    See Appendix \ref{app:sr equiv no comp}.
\end{proof}

Building upon Lemmas \ref{lemma:sr equiv no comp}'s characterization of $(\obe^1,\obe^2)$ satisfying $SR(g^m)$, we now show its equivalence to inequalities that are linear in $g^m_t$. To do so, we first expand \eqref{eq:sr equiv no comp} into the following collection of inequalities:
\begin{align}
&\mathbb{E}^{\mu_t^i(\cdot|m^i_t,x_t)}[r_t^i(x_t, M_t^{j}, m_t^i) +  W^i_{t+1}(f_t(x_t, M_t^j, m_t^i, N_t))] \notag \\
&\geq \mathbb{E}^{\mu_t^i(\cdot|m^i_t,x_t)}[r_t^i(x_t, M_t^{j}, u) +  W^i_{t+1}(f_t(x_t, M_t^j, u, N_t))] \notag\\
&\forall {u}\in \mathcal{U}^i_t. \label{eq:lp1}
    \end{align}
    
Consider the right hand side of \eqref{eq:lp1}. We can evaluate this expectation as follows:
\begin{align}
    &\sum_{m^j \in \mathcal U_t^j, n \in \mathcal{N}_t}
    \mu_t^i(m^j, n|m_t^i, x_t)\notag \\
    &\hspace{60pt}\times\Big[r_t^i(x_t, m^j, u)+ W^i_{t+1}(f_t(x_t,m^j, u, n))\Big], \notag \\
    & =\sum_{m^j \in \mathcal U_t^j, n \in \mathcal{N}_t}
    \frac{Q_t(n)g^m_t(m^j, m_t^i|x_t)}{g^m_t(m_t^i|x_t)}\notag\\
    &\hspace{30pt}\times\Big[r_t^i(x_t, m^j, u)+ W^i_{t+1}(f_t(x_t,m^j, u, n))\Big], \label{eq:lp2}
    \end{align}
    where we have used the definition of $\mu_t^i$ from \eqref{eq:mu}. (Recall that $g^m_t(m_t^i | x_t) >0$ in \eqref{eq:sr equiv no comp}). Writing a similar expression for the left hand side of \eqref{eq:lp1} and canceling $g^m_t(m_t^i | x_t)$ results in the following set of inequalities that are linear in $g_t^m$:
    \begin{align}
    &\sum_{m^j \in \mathcal U_t^j, n \in \mathcal{N}_t}
    Q_t(n)g^m_t(m^j, m_t^i|x_t)\notag\\
    &\hspace{30pt}\times\Big[r_t^i(x_t, m^j, m_t^i)+ W^i_{t+1}(f_t(x_t,m^j, m_t^i, n))\Big]\notag\\
    &\geq \sum_{m^j \in \mathcal U_t^j, n \in \mathcal{N}_t}
    Q_t(n)g^m_t(m^j, m_t^i|x_t)\notag\\
    &\hspace{30pt}\times\Big[r_t^i(x_t, m^j, u)+ W^i_{t+1}(f_t(x_t,m^j, u, n))\Big], \notag \\
    & \hspace{2cm}\forall {u}\in \mathcal{U}^i_t. \label{eq:lp3}
\end{align}
Thus, the condition in Lemma \ref{lemma:sr equiv no comp} requires \eqref{eq:lp3} to hold for each $i$ and for all $m_t^i, x_t$ where $g^m_t(m_t^i|x_t) > 0$. Further, if for some  $m_t^i, x_t$ $g^m_t(m_t^i|x_t) = 0,$  then it follows that $g^m_t(m_t^j, m^i_t| x_t) =0$ for all $m_t^j$ and hence \eqref{eq:lp3} is trivially true in this case since  both sides equal 0. We summarize the above discussion in the following lemma.
\begin{lemma}\label{lemma:sr equiv no comp linear}
   Consider an arbitrary Markovian designer strategy $g^m$ and define the value functions $W^i_{T+1}, \ldots, W^i_1,\:i=1,2$ using \eqref{eq:W_defa} and \eqref{eq:agent val func}. Then,  $(\obe^1, \obe^2)$ satisfies $SR(g^m)$ if and only if the inequalities in \eqref{eq:lp3} hold for  $t=1,\ldots T, i=1,2, j \neq i,$ and for all  $x_t \in \mathcal{X}_t, m_t^i \in \mathcal U_t^i$.
\end{lemma}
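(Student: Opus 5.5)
The plan is to obtain Lemma~\ref{lemma:sr equiv no comp linear} as a corollary of Lemma~\ref{lemma:sr equiv no comp}, showing that for each fixed $(i,t,x_t,m^i_t)$ the argmax condition \eqref{eq:sr equiv no comp} and the linear inequalities \eqref{eq:lp3} are equivalent. The argument splits on whether $g^m_t(m^i_t|x_t)$ is positive or zero. Throughout, the value functions $W^i_{t+1}$ are fixed, being determined by the given $g^m$ through \eqref{eq:W_defa} and \eqref{eq:agent val func}. Consequently, the only place $g^m_t$ enters \eqref{eq:lp3} is through the weights $g^m_t(m^j,m^i_t|x_t)$.

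\emph{Case $g^m_t(m^i_t|x_t)>0$.} First, $m^i_t$ belongs to the argmax in \eqref{eq:sr equiv no comp} exactly when \eqref{eq:lp1} holds for every $u\in\mathcal U^i_t$. Next, I expand both sides of \eqref{eq:lp1} using the definition \eqref{eq:mu} of $\mu^i_t$, as in \eqref{eq:lp2}. Both sides then carry the same strictly positive factor $1/g^m_t(m^i_t|x_t)$. Multiplying through by $g^m_t(m^i_t|x_t)$ preserves the inequality in both directions, so \eqref{eq:lp1} for all $u$ holds if and only if \eqref{eq:lp3} holds for all $u$.

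\emph{Case $g^m_t(m^i_t|x_t)=0$.} Here Lemma~\ref{lemma:sr equiv no comp} imposes no condition. Since $g^m_t(m^i_t|x_t)=\sum_{m^j} g^m_t(m^j,m^i_t|x_t)$ is a sum of nonnegative terms, each $g^m_t(m^j,m^i_t|x_t)$ must vanish. Hence both sides of \eqref{eq:lp3} equal $0$, and \eqref{eq:lp3} holds automatically. Requiring \eqref{eq:lp3} for all $x_t,m^i_t$ is therefore no stronger than requiring it only where $g^m_t(m^i_t|x_t)>0$.

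Combining the two cases, the family of conditions \eqref{eq:lp3} over all $t$, $i$, $j\neq i$, $x_t$, and $m^i_t$ is equivalent to the family of conditions in Lemma~\ref{lemma:sr equiv no comp}. That lemma states they are equivalent to $SR(g^m)$, which gives the claim. There is no substantive obstacle, since the real content is in Lemma~\ref{lemma:sr equiv no comp}. The one point to check carefully is that dividing or multiplying by $g^m_t(m^i_t|x_t)$ is legitimate only in the positive case, which is why the zero-probability messages are handled separately.
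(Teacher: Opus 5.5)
Your proposal is correct and is essentially the paper's own argument. The paper likewise rewrites the argmax condition of Lemma~\ref{lemma:sr equiv no comp} as \eqref{eq:lp1}, expands it via \eqref{eq:mu} as in \eqref{eq:lp2}, and cancels the positive normalizer $g^m_t(m^i_t|x_t)$ to obtain \eqref{eq:lp3}; it then handles the case $g^m_t(m^i_t|x_t)=0$ by noting that every joint weight $g^m_t(m^j,m^i_t|x_t)$ vanishes, so both sides of \eqref{eq:lp3} are zero.
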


\subsection{Decomposition of Problem \ref{prob:no comp markov} into Nested Linear Programs}\label{sec:decomposition markov} 
Based on Lemma \ref{lemma:sr equiv no comp linear}, Problem \ref{prob:no comp markov} can be equivalently expressed as follows: The designer would like to find a Markovian strategy $g^m$, and the value functions $W^i_{T+1}, \ldots, W^i_1$ (defined in \eqref{eq:agent val func} and \eqref{eq:W_defa}) such that the constraints in \eqref{eq:lp3} hold and the designer's total expected reward is maximized. This yields the following reformulation of Problem \ref{prob:no comp markov}:
\begin{align*}
   &\textbf{Global Problem:} \max_{g^m, W^1_{1:T}, W^2_{1:T}} J^0(g^m, \obe^1, \obe^2) \\
    &\text{s.t.} \text{~for $t=T,T-1,\ldots,1,$ $i=1,2,$} 
    \\  
    &\hspace{18pt}\text{the inequalities in \eqref{eq:lp3} hold for all $m_t^i, x_t,$}\\
    &\hspace{18pt}W_t^{i}(x_t) \text{ satisfies \eqref{eq:agent val func} for all $x_t$ (with ~$W_{T+1}^{i} (\cdot) =0$)}.
\end{align*}

We refer to this formulation as the \emph{Global Problem} since both its objective and constraints span the entire time horizon. Solving it directly can be computationally demanding due to the sheer number of optimization variables and the presence of non-linear constraints (e.g., the products of $g_t^m(\cdot | x_t)$ and $W_{t+1}^i(\cdot)$ in \eqref{eq:lp3}). In this section, we develop a sequential decomposition of the Global Problem into a set of smaller, tractable subproblems.

Given that the constraints of the Global Problem are backward inductive with respect to the value functions $W_T^i,\ldots,W_1^i$, we seek a corresponding backward-inductive decomposition of its objective.
To this end, we introduce the following value functions for the designer, which are analogous to agent $i$'s value functions $W_t^i$ defined earlier:
\begin{equation}
    V_{T+1}(x_{T+1}) := 0, \label{eq:V_defa}
\end{equation}
and for $t \le T,$
\begin{align}
    V_t(x_t) &:= \mathbb{E}[r_t^0(x_t, M_t^1, M_t^2)\notag \\
    &\hspace{20pt}+ V_{t+1}(f_t(x_t, M_t^1, M_t^2, N_t))|X_t=x_t], \label{eq:V_def designer}
\end{align}
More explicitly, $V_t(x_t)$ can be written as the following expression:
\begin{align}\label{eq:designer val func}
    &V_t(x_t) = \sum_{\substack{m^1 \in \mathcal U_t^1\\ m^2 \in \mathcal U_t^2, n \in \mathcal{N}_t}}
Q_t(n)g^m_t(m^1, m^2 | x_t) \notag \\
&\times\left[r_t^0(x_t, m^1, m^2)+ V_{t+1}(f_t(x_t, m^1, m^2, n))\right].
\end{align}
We now construct a backward inductive sequence of optimization problems using the functions $V_{T+1},\ldots, V_1$. We start at time $T$. For  a realization $x_T$ of $X_T$, we formulate the following optimization problem:
\begin{align*}\label{equ: nested problem objective}
&\mathbf{LP}_T(x_T):\hspace{10pt}\max_{\substack{g_T^m(\cdot|x_T), \\V_T(x_T), W_T^1(x_T), W_T^2(x_T)}} V_T(x_T)\notag \\
&\text{s.t.} \text{~for $t=T,$}
\\
&\hspace{25pt}\text{the inequalities in \eqref{eq:lp3} hold  for all $m_T^i,\:i=1,2$} \notag \\
    &\hspace{25pt}W_T^{1,2}(x_T) \text{ satisfy }\eqref{eq:agent val func}\text{~(with $W^{1,2}_{T+1}(\cdot) =0$),} \notag\\
    &\hspace{25pt}V_T(x_T) \text{ satisfies }\eqref{eq:designer val func} \text{~(with $V_{T+1}(\cdot) =0$)}.
\end{align*}
Since the objective and constraints of the preceding optimization problem are linear in its variables $g_T^m(\cdot|x_T), V_T(x_T), W^1_T(x_T), W^2_T(x_T)$, we refer to  it as the linear program $\mathbf{LP}_T(x_T)$. Solving the family of linear programs $\mathbf{LP}_T(x_T)$ for each $x_T \in \mathcal{X}_T$ yields the functions $V_T(\cdot)$ and $W^{1,2}_T(\cdot)$, which are subsequently used to formulate a linear program at time $T-1$ for each $x_{T-1} \in \mathcal{X}_{T-1}$. This procedure in then repeated inductively for earlier time steps. Algorithm~\ref{alg:1} summarizes this backward induction procedure for $t=T, T-1, \ldots,1$, where solving the linear program  $\mathbf{LP}_t(x_t)$ yields (among other things)  $g^m_t(\cdot|x_t)$.
\begin{algorithm}[H]
\caption{}\label{alg:1}
\begin{algorithmic}
    \State{$W^1_{T+1}(\cdot) = W^2_{T+1}(\cdot) = V_{T+1}(\cdot) =0$}
    \For{$t=T,T-1,\ldots,1$}
        \For{ each $x_t \in \mathcal{X}_t$} 
            \State{
            $g_t^m(\cdot|x_t), V_t(x_t), W_t^1(x_t), W_t^2(x_t) =$ 
            \par \hspace{15pt}Solution of the following  linear program $\mathbf{LP}_t(x_t)$
            \vspace{10pt}
            \par
                \hspace{10pt}$\mathbf{LP}_t(x_t) :\max_{\substack{g_t^m(\cdot|x_t),\\V_t(x_t), W^1_t(x_t),W^2_t(x_t)}} V_t(x_t)$ \par
                \hspace{10pt}s.t. the inequalities  in \eqref{eq:lp3} holds $\forall \:m_t^i,$ and $i=1,2$,
                \par
                \hspace{25pt}$W_t^i(x_t)$ satisfy \eqref{eq:agent val func} for each $i=1,2$, 
                \par
                \hspace{25pt}$V_t(x_t)$ satisfies \eqref{eq:designer val func}}.
            \EndFor
            \State{$g^m_t = \{g_t^m(\cdot|x_t)\}_{x_t \in \mathcal{X}_t}$}
    \EndFor
     \State{\Return{$g^m = (g_1^m, \ldots, g_T^m)$}}
\end{algorithmic}
\end{algorithm}

\begin{theorem}\label{thm:alg1}
   The designer messaging strategy $g^m$ returned by Algorithm \ref{alg:1} is an optimal solution for Problem \ref{prob:no comp markov}.
\end{theorem}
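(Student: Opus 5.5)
The plan is to compare Algorithm~\ref{alg:1} with an arbitrary feasible Markovian strategy by backward induction on $t$. Let $g^{m}$ be the output of Algorithm~\ref{alg:1}, with value functions $V_t, W^1_t, W^2_t$. Let $h^m \in \mathcal{G}^m_{Markov}$ be any strategy such that $(\obe^1,\obe^2)$ satisfies $SR(h^m)$, with value functions $V^h_t, W^{1,h}_t, W^{2,h}_t$ defined by \eqref{eq:designer val func} and \eqref{eq:agent val func}.

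\textbf{Step 1 (feasibility of $g^m$).} Each $\mathbf{LP}_t(x_t)$ imposes exactly the inequalities \eqref{eq:lp3} at $(t,x_t)$, written with the $W^i_{t+1}$ already produced by the algorithm. It also forces $W^i_t(x_t)$ and $V_t(x_t)$ to satisfy \eqref{eq:agent val func} and \eqref{eq:designer val func}. Hence the $W^i_t$ returned by the algorithm are precisely the value functions induced by $g^m$, and \eqref{eq:lp3} holds for all $t,i,x_t,m^i_t$. By Lemma~\ref{lemma:sr equiv no comp linear}, $(\obe^1,\obe^2)$ then satisfies $SR(g^m)$. I also need each $\mathbf{LP}_t(x_t)$ to be feasible; for this I would cite the existence of a correlated equilibrium of the one-stage game with payoffs $r^i_t(x_t,\cdot)+\mathbb{E}[W^i_{t+1}(f_t(x_t,\cdot,N_t))]$. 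Finally, since $J^0(g^m,\obe^1,\obe^2)=\sum_{x_1}P_{X_1}(x_1)V_1(x_1)$, it suffices to show $V^h_1\le V_1$ pointwise.

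\textbf{Step 2 (optimality by induction).} I would prove $V^h_t(x_t)\le V_t(x_t)$ for all $x_t$, by downward induction from $t=T+1$, where both sides are $0$. Fix $x_t$. Because $V_{t+1}\ge V^h_{t+1}$, replacing $V^h_{t+1}$ by $V_{t+1}$ in \eqref{eq:designer val func} weakly increases the designer's value of $h^m_t(\cdot|x_t)$. It then remains to show that $h^m_t(\cdot|x_t)$ is feasible for $\mathbf{LP}_t(x_t)$, so that its value there is at most the optimum $V_t(x_t)$.

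\textbf{Step 3 (main obstacle).} By Lemma~\ref{lemma:sr equiv no comp linear}, $h^m_t(\cdot|x_t)$ satisfies \eqref{eq:lp3} with $W^{i,h}_{t+1}$, not with the algorithm's $W^i_{t+1}$. The continuation term depends on the deviation $u$ through $f_t$, so it does not cancel between the two sides of \eqref{eq:lp3}. I therefore expect this step to be the real difficulty.

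My first attempt would be to strengthen the induction hypothesis. I would look for a comparison between $W^i_{t+1}$ and $W^{i,h}_{t+1}$ across states reachable from $x_t$, for example that the obedience gaps defining \eqref{eq:lp3} can only grow. This could potentially be arranged by a tie-breaking rule in $\mathbf{LP}_{t+1}$ among designer-optimal solutions. If no such comparison holds, the natural fix is to carry the agents' continuation values as parameters: define $V_t$ as a function of $(x_t, W^1_{t+1}, W^2_{t+1})$, or equivalently solve the problem over achievable value vectors in the style of \cite{sun2025optimal}.

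Before relying on Step 3, I would first check on a two-period example that the greedy choice of $W_{t+1}$ made by the algorithm cannot be beaten. The test would be a strategy $h^m$ that accepts a lower designer payoff at $t+1$ in exchange for agent continuation values $W^{i,h}_{t+1}$ that relax \eqref{eq:lp3} at time $t$. If such an example beats the algorithm, the theorem as stated would need to be amended rather than proved.
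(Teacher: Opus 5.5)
Your Steps~1--2 are the paper's own proof. It first shows the algorithm's output is feasible for the Global Problem, then proves $V^{global}_t\le V_t$ by downward induction, using that the competitor's time-$t$ distribution is feasible for $\mathbf{LP}_t(x_t)$. Your remark that each $\mathbf{LP}_t(x_t)$ is nonempty is correct and is left implicit in the paper: the stage game with payoffs $r^i_t(x_t,\cdot)+\mathbb{E}[W^i_{t+1}(f_t(x_t,\cdot,N_t))]$ has a correlated equilibrium.

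The obstacle you isolate in Step~3 is genuine, and the paper does not overcome it. It simply asserts that $(g^{m,global}_l(\cdot|x_l),W^{1,global}_l(x_l),W^{2,global}_l(x_l),\hat V_l(x_l))$ is ``easy to verify'' feasible for $\mathbf{LP}_l(x_l)$. However, \eqref{eq:lp3} and \eqref{eq:agent val func} inside $\mathbf{LP}_l(x_l)$ are written with the algorithm's $W^i_{l+1}$, whereas the competitor satisfies them only with $W^{i,global}_{l+1}$. The assertion holds only at $l=T$, where both continuations vanish. For $l<T$ it fails, and so does the theorem. The two-period test you propose does beat the algorithm, so no strengthened hypothesis can complete Step~3; the statement needs to be refuted, not proved.

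\textbf{Counterexample.} Set up the model as follows.
\begin{itemize}
\item Let $T=2$, $X_1=s_0$, $N_t$ trivial, $\mathcal U^1_t=\{C,D\}$ and $\mathcal U^2_t=\{L,R\}$.
\item Transitions: $X_2=s_C$ if $U^1_1=C$, and $X_2=s_D$ if $U^1_1=D$.
\item Rewards: $r^2_t\equiv 0$. At $t=1$, $r^1_1=1$ if $u^1=D$ (else $0$) and $r^0_1=10$ if $u^1=C$ (else $0$).
\item At $t=2$, all rewards vanish at $s_C$. At $s_D$ they depend only on $u^2$: $(r^0_2,r^1_2)=(1,0)$ for $u^2=L$ and $(0,-5)$ for $u^2=R$.
\end{itemize}

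At $t=2$ every inequality in \eqref{eq:lp3} holds with equality. Hence every optimal solution of $\mathbf{LP}_2(s_D)$ puts all mass on $u^2=L$, giving $V_2(s_D)=1$ and $W^1_2(s_D)=0=W^1_2(s_C)$. Then \eqref{eq:lp3} at $(t,x_t,m^1_t)=(1,s_0,C)$ with $u=D$ reads $0\ge g^m_1(C|s_0)$. So the algorithm recommends $D$ and earns $J^0=1$.

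Now take the Markovian strategy recommending $(C,L)$ at $s_0$, $(C,R)$ at $s_D$, and anything at $s_C$. It has $W^1_2(s_D)=-5$ and satisfies \eqref{eq:lp3} everywhere; at $s_0$ the constraint reads $0\ge 1-5$. So obedience is SR by Lemma~\ref{lemma:sr equiv no comp linear}, and this strategy earns $J^0=10>1$.

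\textbf{Why it fails.} The punishment sits at a state reached only after a deviation, so it costs the designer nothing on path. The state-by-state maximization of $V_2(s_D)$ nevertheless discards it. Since $W^1_2(s_D)=0$ in every optimal solution of $\mathbf{LP}_2(s_D)$, your tie-breaking idea cannot rescue the algorithm either. The same example also invalidates the optimality claims of Theorems~\ref{thm:two} and~\ref{thm:three}.

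\textbf{What a correct statement needs.} The fix is your fallback: the recursion must choose the continuation values $(V_{t+1},W^1_{t+1},W^2_{t+1})$ jointly with $g^m_t$. One way is to propagate sets of achievable payoff vectors, rather than fixing continuation values greedily by maximizing $V_{t+1}$ state by state.
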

\begin{proof}
    See Appendix~\ref{appendix: proofofalgorithm}.
\end{proof}
\begin{remark}
    While our analysis thus far has assumed a fixed action space $\mathcal{U}^i_t$ for agent $i$ at time $t$, the results extend naturally to state-dependent action spaces where $\mathcal{U}^i_t(x_t)$ is agent $i$'s action space   at time $t$ if $X_t=x_t$.
\end{remark}

\section{Extension to General Messaging Strategies}\label{sec:SA-General} 
In Section~\ref{sec:SA-Markovian}, we restricted the designer's messaging strategy to depend only on the current state $X_t$, yielding a Markovian formulation. We now return to the original formulation in Section~\ref{sec:model1}, allowing for general messaging strategies that  use the full information $C_t$. It will be convenient to  write  $C_t$ as $(X_t, \tilde{C}_t)$ where $\tilde{C}_t = \{X_{1:t-1}, U_{1:t-1}^{1,2}, M_{1:t-1}^{1,2}\}$ represents the history of past states, actions and messages. Similarly, every realization $c_t $ of $C_t$ can be written  as the pair $c_t = (x_t, \tilde{c}_t)$. When the specific realization $\tilde{c}_t$ is immaterial, we suppress it for brevity and write $c_t = (x_t, \cdot)$.

We will follow the same approach as in Section \ref{sec:SA-Markovian}. The key idea is that one can view $C_t = (X_t, \tilde{C}_t)$ as the ``current state" that evolves according to the dynamics
\begin{align}\label{eq:dyn_ct}
X_{t+1} &= f_t(X_t, U_t^1, U_t^2, N_t)  \\
    \tilde{C}_{t+1} &= (\tilde{C}_t, X_t, U_t^{1,2}, M^{1,2}_t).
\end{align}
By treating $C_t =(X_t, \tilde{C}_t)$ as an expanded state, a general designer strategy is a ``Markovian" strategy with respect to the expanded state. Therefore, we can now  follow the analysis of Section \ref{sec:SA-Markovian}  with this new expanded state. Following the development in Section \ref{sec:SA-Markovian}, we define value functions $W^i_t(x_t,\tilde{c}_t)$ and $V_t(x_t,\tilde{c}_t)$ as follows:  $V_{T+1}(\cdot) = W^i_{T+1}(\cdot) =0$, and for $t \le T$,
\begin{align}\label{eq:W_def expanded}
    W_t^i(x_t, \tilde{c}_t) &:= \sum_{\substack{m^1 \in \mathcal U_t^1\\ m^2 \in \mathcal U_t^2, n \in \mathcal{N}_t}}
Q_t(n)g^m_t(m^1, m^2 | x_t, \tilde c_t) \notag \\
&\times\left[r_t^i(x_t, m^1, m^2)+ W_{t+1}^i(x_{t+1}, \tilde{c}_{t+1})\right],
\end{align}
\begin{align}\label{eq:V_def expanded}
    V_t(x_t, \tilde{c}_t) &:= \sum_{\substack{m^1 \in \mathcal U_t^1\\ m^2 \in \mathcal U_t^2, n \in \mathcal{N}_t}}
Q_t(n)g^m_t(m^1, m^2 | x_t, \tilde c_t) \notag \\
&\times\left[r_t^0(x_t, m^1, m^2)+ V_{t+1}(x_{t+1}, \tilde{c}_{t+1})\right],
\end{align}
where $x_{t+1} = f_t(x_t, m^{1,2}, n),$ $\tilde{c}_{t+1} =(\tilde{c}_t, x_t,m^{1,2}, m^{1,2})$.
Using arguments analogous to those in Section~\ref{subsec:constraints markov}, we derive a similar result as Lemma~\ref{lemma:sr equiv no comp linear}, which allows the $SR$ constraint in Problem \ref{prob:no comp} to be reformulated as the following set of inequalities  \\
~\\
$\forall {u}\in \mathcal{U}^i_t$,
\begin{align}
    &\sum_{m^j \in \mathcal U_t^j, n \in \mathcal{N}_t}
    Q_t(n)g^m_t(m^j, m_t^i|x_t, \tilde c_t)\notag\\
    &\hspace{20pt}\times\Big[r_t^i(x_t, m^j, m_t^i)+ W^i_{t+1}(x_{t+1}, \tilde{c}_{t+1})\Big]\notag\\
    &\geq \sum_{m^j \in \mathcal U_t^j, n \in \mathcal{N}_t}
    Q_t(n)g^m_t(m^j, m_t^i|x_t, \tilde c_t)\notag\\
    &\hspace{32pt}\times\Big[r_t^i(x_t, m^j, u)+ W^i_{t+1}(x_{t+1}', \tilde{c}_{t+1}')\Big], \label{eq:lp expanded}
\end{align}
where $x_{t+1} = f_t(x_t, m^{1,2}, n),$ $\tilde{c}_{t+1} =(\tilde{c}_t, x_t,m^{1,2}, m^{1,2}),$ $x_{t+1}' = f_t(x_t, m^j, u, n),$ $\tilde{c}_{t+1}' =(\tilde{c}_t, x_t,m^j, u, m^j, m_t^i)$.

Following the analysis of Section \ref{sec:decomposition markov}, we can write a sequence of linear programs $\mathbf{LP}_t(x_t, \tilde c_t)$ as defined below that characterize the optimal designer strategy.
\begin{align*}
&\mathbf{LP}_t(x_t, \tilde c_t):\hspace{10pt}\max_{\substack{g_t^m(\cdot|x_t, \tilde c_t), \\V_t(x_t, \tilde c_t), W_t^1(x_t, \tilde c_t), W_t^2(x_t, \tilde c_t)}} V_t(x_t, \tilde c_t)\notag \\
&\text{s.t. the inequalities in \eqref{eq:lp expanded} hold  for all $m_t^i,\:i=1,2$,} \notag \\
    &\hspace{15pt}W_t^{1,2}(x_t, \tilde c_t) \text{ satisfies }\eqref{eq:W_def expanded} \text{~(with $W^i_{T+1}(\cdot) =0$)}, \notag\\
    &\hspace{15pt}V_t(x_t, \tilde c_t) \text{ satisfies }\eqref{eq:V_def expanded} \text{~(with $V_{T+1}(\cdot) =0$)}.
\end{align*}

\begin{theorem}\label{thm:two}
    The designer strategy $g^m$ where for each time $t$ and each $x_t, \tilde{c}_t$, $g_t^m(\cdot|x_t, \tilde c_t)$ comes from the solution of the linear program $\mathbf{LP}_t(x_t, \tilde c_t)$ is an optimal solution for Problem \ref{prob:no comp}.
\end{theorem}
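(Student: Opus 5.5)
Theorem~\ref{thm:two} follows by running the argument for Theorem~\ref{thm:alg1} with the expanded state $C_t=(X_t,\tilde C_t)$ in place of $X_t$. The plan has three steps.

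\textbf{Step 1: the constraint.} I would first prove the analogue of Lemma~\ref{lemma:sr equiv no comp linear} for general strategies: for an arbitrary $g^m\in\mathcal G^m$, $(\obe^1,\obe^2)$ satisfies $SR(g^m)$ if and only if \eqref{eq:lp expanded} holds for all $t,i,x_t,\tilde c_t,m^i_t,u$, with $W^i$ defined by \eqref{eq:W_def expanded}.

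By backward induction, $W^i_t(x_t,\tilde c_t)$ equals the left side of \eqref{eq:sr} at $c_t$. Lemma~\ref{lemma:sr equiv no comp} used the Markov property of the designer only to make the obedient continuation value a function of the next state. With the expanded state, the obedient continuation value after a deviation is $W^i_{t+1}(x'_{t+1},\tilde c'_{t+1})$, and $\tilde c'_{t+1}$ records the deviation. Its definition is still meaningful, since $W^i_{t+1}$ is defined on every history, including off-path ones.

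The one-shot deviation argument then goes as follows.
\begin{itemize}
\item[(a)] Sufficiency is proved by backward induction on $t$. Any deviation $g^i$ from $c_t$ is bounded above by a one-step deviation followed by obedience, because obedience is optimal from every $c_{t+1}$ by the induction hypothesis. Agent $i$ conditions on $m^i_t$, and the posterior on $(M^j_t,N_t)$ is the expanded analogue of $\mu^i_t$ in \eqref{eq:mu}.
\item[(b)] Necessity comes from applying \eqref{eq:sr} to strategies that deviate only at $(c_t,m^i_t)$.
\end{itemize}
Clearing the denominator $g^m_t(m^i_t\mid x_t,\tilde c_t)$ gives \eqref{eq:lp expanded}. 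When that denominator is zero, both sides of the inequality vanish, exactly as in the Markov case.

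\textbf{Step 2: the objective.} The designer's objective decomposes as $J^0(g^m,\obe^1,\obe^2)=\sum_{x_1}P_{X_1}(x_1)V_1(x_1,\emptyset)$, with $V$ defined by \eqref{eq:V_def expanded}. The Global Problem is therefore to maximize this quantity subject to \eqref{eq:lp expanded} and the recursions \eqref{eq:W_def expanded} and \eqref{eq:V_def expanded}.

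\textbf{Step 3: optimality of the nested LPs.} Here I would mimic the proof of Theorem~\ref{thm:alg1}. Let $\hat g^m$ be built from the solutions of $\mathbf{LP}_t(x_t,\tilde c_t)$, solved backward. Its $\hat W,\hat V$ satisfy the recursions and constraints, so $\hat g^m$ is feasible by Step 1.

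For optimality, take any feasible $g^m$ with value functions $W,V$. The difficulty is that the constraints at time $t$ involve $W_{t+1}$, which depends on the future strategy; hence $g^m_t$ need not be feasible for $\mathbf{LP}_t$ built from $\hat W_{t+1}$. This is the main obstacle.

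I would handle it the same way as in Appendix~\ref{appendix: proofofalgorithm}. Show by backward induction that $V_t\le\hat V_t$ pointwise, and try to establish, for every feasible continuation, a domination relation between its $W_{t+1}$ and $\hat W_{t+1}$ that keeps $g^m_t$ feasible in $\mathbf{LP}_t$. If no such direct domination holds, the fallback is to define each $\mathbf{LP}_t$ as parameterized by the chosen continuation and argue through the recursive structure inherited from the Markov proof. Since the expanded state is itself a controlled Markov chain with dynamics \eqref{eq:dyn_ct}, that proof transfers verbatim once $X_t$ is replaced by $(X_t,\tilde C_t)$, and the finiteness of the sets keeps every LP finite.

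Optimality over all of $\mathcal G^m$, rather than only Markov strategies, is automatic, because every general strategy is Markov in $C_t$.
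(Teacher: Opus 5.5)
\textbf{Verdict: Step 3 has a genuine gap that cannot be closed, because the statement itself is false.}

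Your Steps 1 and 2 are sound. They match what the paper intends: its proof of this theorem is just ``redo Section~\ref{sec:SA-Markovian} with $(x_t,\tilde c_t)$ as the state''. The gap is in Step 3, and it is exactly the obstacle you name. A feasible $g^m$ satisfies \eqref{eq:lp expanded} with its \emph{own} continuation $W_{t+1}$, whereas $\mathbf{LP}_t$ is built from $\hat W_{t+1}$, so $g^m_t$ need not be feasible for $\mathbf{LP}_t$.

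Appealing to the Markov proof does not close this. Appendix~\ref{appendix: proofofalgorithm} only asserts that feasibility (``it is now easy to verify that the tuple \ldots\ is a feasible solution''), so ``transfers verbatim'' imports the gap rather than closing it. Neither of your two repairs can work:
\begin{itemize}
\item \emph{Domination.} The time-$t$ constraint depends on the \emph{difference} between agent $i$'s continuation value after obeying and after deviating. A continuation that is worse for the designer can enlarge that difference. Hence no pointwise comparison between $W_{t+1}$ and $\hat W_{t+1}$ preserves feasibility.
\item \emph{Continuation-dependent $\mathbf{LP}_t$.} This is a different algorithm. It would have to optimize over the whole set of SR-feasible continuation vectors $(V_{t+1},W^1_{t+1},W^2_{t+1})$, not over a single designer-optimal point. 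At off-path histories $\tilde c'_{t+1}$ that set includes continuations that punish the deviator, which a designer-maximizing LP never selects.
\end{itemize}

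\textbf{Counterexample.} Write payoffs as (designer, agent 1, agent 2). Take $T=2$, $X_1=s$ deterministic, no noise, $\mathcal U^1_1=\{a,b\}$, $\mathcal U^2_1$ a singleton, and $X_2=U^1_1$. Time-1 payoffs are $(10,0,0)$ under $a$ and $(0,1,0)$ under $b$. At $t=2$ let $\mathcal U^1_2=\mathcal U^2_2=\{L,R\}$. In state $b$ payoffs are constant, equal to $(0,\tfrac12,0)$. In state $a$, $(L,L)$ pays $(0,2,1)$, $(R,R)$ pays $(1,1,1)$, and mismatched pairs pay $(0,0,0)$.
\begin{itemize}
\item For every $\tilde c_2$, $\mathbf{LP}_2(a,\tilde c_2)$ has the unique solution ``recommend $(R,R)$''. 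Hence $\hat W^1_2(a,\cdot)=1$, while $\hat W^1_2(b,\cdot)=\tfrac12$.
\item In $\mathbf{LP}_1$, the constraint for message $a$ reads $g^m_1(a)\,(0+1)\ge g^m_1(a)\,(1+\tfrac12)$. This forces $g^m_1(a)=0$, so the nested-LP strategy earns $0$.
\item Now recommend $a$ at $t=1$ and $(L,L)$ in state $a$. This is sequentially rational: $(L,L)$ is a strict equilibrium, and $0+2\ge 1+\tfrac12$. It earns $10$.
\end{itemize}
So the inductive claim $V_1\le\hat V_1$ fails at $t=1$. Since the better strategy is Markov, the same example also contradicts Theorem~\ref{thm:alg1}. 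A correct recursion must carry the set of attainable continuation payoff vectors, not one designer-optimal LP solution per state or history.
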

\begin{proof}
    The proof follows from an analysis similar to Section \ref{sec:SA-Markovian} with $x_t, \tilde{c}_t$ viewed as the ``current state".
\end{proof}
While Theorem \ref{thm:two} provides a method for finding an optimal designer strategy using a sequence of linear programs, the challenge is that the number of linear programs involves increases exponentially in time since the number of possible $\tilde c_t$ increases exponentially. We now argue that the best Markovian strategy found by Algorithm \ref{alg:1} in Section \ref{sec:decomposition markov} is in fact an optimal strategy for Problem \ref{prob:no comp}. 

\begin{theorem}\label{thm:three}
    Let $g_t^m(\cdot|x_t), V_t(x_t), W_t^1(x_t), W_t^2(x_t) $ be an optimal solution of $\mathbf{LP}_t(x_t)$ in Algorithm \ref{alg:1}. Define 
    \begin{align}
        &\hat{g}^m_t(\cdot|x_t, \tilde{c}_t) = g_t^m(\cdot|x_t), \quad  \hat V_t(x_t, \tilde{c}_t) = V_t(x_t) \\
        &\hat W^i_t(x_t, \tilde{c}_t) = W^i_t(x_t), \quad i=1,2.
    \end{align}
    Then, $(\hat{g}^m_t(\cdot|x_t, \tilde{c}_t), \hat V_t(x_t, \tilde{c}_t), \hat W^{1,2}_t(x_t, \tilde{c}_t) )$ is an optimal solution of $\mathbf{LP}_t(x_t,\tilde{c}_t)$. Consequently, the strategy obtained from Algorithm \ref{alg:1} is an optimal designer strategy in Problem~\ref{prob:no comp}.
\end{theorem}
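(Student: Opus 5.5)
The plan is a backward induction on $t$ that proves two things together: the lifted tuple is feasible for $\mathbf{LP}_t(x_t,\tilde c_t)$, and it attains the optimal value. The key observation is that $\mathbf{LP}_t(x_t,\tilde c_t)$ depends on $\tilde c_t$ only through the continuation functions $\hat W^i_{t+1}(\cdot,\tilde c_{t+1})$ and $\hat V_{t+1}(\cdot,\tilde c_{t+1})$. These enter the constraints \eqref{eq:lp expanded}, \eqref{eq:W_def expanded} and \eqref{eq:V_def expanded} as fixed coefficients.

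\textbf{Induction hypothesis.} For every $x_{t+1}$ and every history $\tilde c_{t+1}$,
\[
\hat W^i_{t+1}(x_{t+1},\tilde c_{t+1}) = W^i_{t+1}(x_{t+1}), \qquad \hat V_{t+1}(x_{t+1},\tilde c_{t+1}) = V_{t+1}(x_{t+1}),
\]
where the right-hand sides come from Algorithm~\ref{alg:1}. Here the left-hand sides are the values produced by solving the expanded LPs at time $t+1$ (Theorem~\ref{thm:two}'s construction). The base case $t=T$ is trivial because all terminal functions are $0$.

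\textbf{Inductive step.} Under the hypothesis, substitute into \eqref{eq:lp expanded}. We have $x_{t+1}=f_t(x_t,m^{1,2},n)$ and $x'_{t+1}=f_t(x_t,m^j,u,n)$, and the history arguments $\tilde c_{t+1},\tilde c'_{t+1}$ drop out. The constraint then reduces exactly to \eqref{eq:lp3} with the Markov $W^i_{t+1}$. Likewise, \eqref{eq:W_def expanded} and \eqref{eq:V_def expanded} reduce to \eqref{eq:agent val func} and \eqref{eq:designer val func}. So $\mathbf{LP}_t(x_t,\tilde c_t)$ is literally the same linear program as $\mathbf{LP}_t(x_t)$: same variables (renamed), same objective, same constraints. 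Hence an optimal solution of $\mathbf{LP}_t(x_t)$, lifted as in the statement, is optimal for $\mathbf{LP}_t(x_t,\tilde c_t)$. Its optimal values are $V_t(x_t)$ and $W^i_t(x_t)$, independent of $\tilde c_t$, which re-establishes the hypothesis at time $t$.

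\textbf{Main obstacle.} The delicate point is what ``the'' solution of the expanded LPs at later times means. If $\mathbf{LP}_{t+1}$ has multiple optima, different choices could give different $\hat W^i_{t+1}$ (agents' values), even though $\hat V_{t+1}$ is unique as the optimal value. This would change the constraints at time $t$. I would handle it by stating the induction for the specific lifted solution chosen at every stage: at each time the expanded LP is solved by the lift of the Markov solution. This is consistent with Theorem~\ref{thm:two}, which allows any optimal solution of each $\mathbf{LP}_t(x_t,\tilde c_t)$. The induction then shows the lift is a valid selection at every stage.

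\textbf{Conclusion.} With this selection, Theorem~\ref{thm:two} says the resulting strategy $\hat g^m$ is optimal for Problem~\ref{prob:no comp}. Since $\hat g^m_t(\cdot|x_t,\tilde c_t)=g^m_t(\cdot|x_t)$, this strategy is exactly the Markovian output of Algorithm~\ref{alg:1}. As a sanity check, its value $V_1(x_1)$ averaged over $P_{X_1}$ matches the optimal value of Problem~\ref{prob:no comp}.
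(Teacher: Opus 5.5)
Your inductive argument for the first claim matches the paper's. Once the time-$(t+1)$ continuation values are the lifts of $W^i_{t+1}$ and $V_{t+1}$ at every history (including off-path ones such as $\tilde c'_{t+1}$), the coefficients in \eqref{eq:lp expanded}, \eqref{eq:W_def expanded}, \eqref{eq:V_def expanded} coincide with those in \eqref{eq:lp3}, \eqref{eq:agent val func}, \eqref{eq:designer val func}. So $\mathbf{LP}_t(x_t,\tilde c_t)$ is literally $\mathbf{LP}_t(x_t)$, and fixing the selection explicitly is more careful than the paper. The gap is the ``consequently'' step, which you, like the paper, import from Theorem~\ref{thm:two}. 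Your own ``main obstacle'' shows why that import is illegitimate. If the choice among optima of $\mathbf{LP}_{t+1}$ can change the time-$t$ incentive constraints, it can change the time-$t$ optimal value, so Theorem~\ref{thm:two} cannot hold for every stage-wise optimal selection, and you cannot invoke it for yours. The defect is deeper than ties, however. The constraints \eqref{eq:lp3} at time $t$ depend on the agents' continuation values $W^i_{t+1}$. The designer may gain by accepting a lower $V_{t+1}$ in exchange for a higher $W^i_{t+1}$ that makes a lucrative time-$t$ recommendation obedient, and greedy maximization of $V_{t+1}$ never considers this. The proof of Theorem~\ref{thm:alg1}, which Theorem~\ref{thm:two}'s proof mimics, breaks exactly where it asserts that $(g^{m,global}_l(\cdot|x_l),W^{1,global}_l(x_l),W^{2,global}_l(x_l),\hat V_l(x_l))$ is feasible for $\mathbf{LP}_l(x_l)$. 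That tuple satisfies \eqref{eq:lp3} with $W^{i,global}_{l+1}$, not with the algorithm's $W^i_{l+1}$.

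Here is a counterexample. Let $T=2$, with no noise and $X_1=a$. Let $\mathcal U^1_1=\{L,R\}$ and let $\mathcal U^2_1$ be a singleton, with $L$ leading to $x_2=\ell$ and $R$ to $x_2=r$. At time $1$, agent~1 receives $0$ under $L$ and $3/2$ under $R$, the designer receives $10$ under $L$ and $0$ under $R$, and agent~2 receives nothing. At $\ell$ the agents play battle of the sexes on $\{A,B\}$ with payoffs $(2,1)$ at $(A,A)$, $(1,2)$ at $(B,B)$ and $(0,0)$ otherwise; the designer receives $1$ at $(B,B)$ and $0$ otherwise. All rewards at $r$ are $0$. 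Then $\mathbf{LP}_2(\ell)$ has the unique solution $g^m_2(B,B|\ell)=1$, so $W^1_2(\ell)=1$. Constraint \eqref{eq:lp3} at $x_1=a$, $m^1_1=L$, $u=R$ then reads $g^m_1(L|a)\ge \frac{3}{2}\,g^m_1(L|a)$, so Algorithm~\ref{alg:1} is forced to $g^m_1(R|a)=1$ and the designer gets $0$. Yet the Markovian strategy recommending $L$ at $a$ and $(A,A)$ at $\ell$ satisfies $SR$: agent~1's continuation value at $\ell$ is $2\ge 3/2$, and $(A,A)$ is a strict equilibrium of the stage game. It gives the designer $10$. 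Paying the designer $1$ at both $(A,A)$ and $(B,B)$ reproduces exactly your tie scenario: selecting $(B,B)$ yields $0$, while selecting $(A,A)$ yields $11$. So the first half of the theorem stands, but its conclusion fails already for Problem~\ref{prob:no comp markov}, and no argument can close this step. A correct method would need to carry backward the set of attainable triples $(V_{t+1},W^1_{t+1},W^2_{t+1})$ rather than a single selected one.
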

\begin{proof}
    See Appendix~\ref{appendix:proofofthmthree}.
\end{proof}

\section{Example}
\subsection{Model: Multi-access Broadcast System} 
We consider a multi-access broadcast system consisting of two agents that need to transmit packets over a shared channel of limited capacity \cite{schoute1978decentralized, varaiya1979decentralized}. Each agent has a buffer that can store up to $\beta$ packets. Let $B_t^i \in \{0,1, \ldots, \bufcap\}$ denote the number of packets in agent $i$'s buffer  at the beginning of time $t$. Let $A_t^i\in\{0,1\}$ be the number of new packets arriving at agent $i$'s buffer at time $t$. At any time $t$,  $P(A^i_t =1) = p^i$. 

At each time $t$, agent $i$  chooses to transmit $U^i_t \in \{0,1, \ldots, B^i_t\} $. If the total number of packets transmitted by the two agents is no more than the channel capacity $\budcap$, then the packets are successfully transmitted; otherwise the transmission is unsuccessful. Thus, the dynamics of agent $i$'s buffer can be written as:
\begin{align}\label{mab:dyn}
    B_{t+1}^i = \min\left((B_t^i - U_t^i I_{succ})^+ + A_t^i, \bufcap\right),
\end{align}
where $(\cdot)^+ = \max(0,\cdot)$ and $I_{succ} = \mathds{1}_{\{U_t^1 + U_t^2 \leq \budcap\}}$. 

The model described above can be seen as an instance of the system model in Section \ref{sec:model1} with $X_t = (B_t^1, B_t^2)$ and $N_t = (A^1_t, A^2_t)$. We assume that $X_1, A^1_1, A^2_1, \ldots A^1_T,A^2_T $ are independent random variables. 
At each time $t$, the designer and both agents have access to the same information as in \eqref{eq:info structure} in Section~\ref{sec:model1}. The designer generates a message pair $(M_t^1, M_t^2)$ using a distribution $D_t^m = g_t^m(X_t)$ as in Section~\ref{sec:SA-Markovian}.

\emph{Reward Structure:} The designer's reward function is 
\begin{align}\label{mab:designer reward}
    r^0_t(X_t, U_t^1, U_t^2) &:= 
    \left[\frac{U_t^1 + U_t^2}{\budcap} + \alpha \phi(U_t^1, U_t^2)\right] I_{succ} \notag\\
    &- \kappa \left(1-I_{succ}\right) - \lambda^0\sum_{i=1}^2 I_{bound}^i
\end{align}
where $\alpha, \kappa,\lambda^0 \geq 0$, $I_{bound}^i = \mathds{1}_{\{B_t^i = \bufcap\}}$ and $\phi$ measures resource allocation fairness. We define $\phi$ using Jain's fairness index \cite{jain1984quantitative}; specifically for each possible $u_t^1, u^2_t$,
\begin{equation*}
\phi(u_t^1, u_t^2) = \frac{(u_t^1 + u_t^2)^2}{2[(u_t^1)^2 + (u_t^2)^2]},\quad\text{with}\quad\phi(0, 0)= 1.
\end{equation*}
The term $(U_t^1 + U_t^2)/{\budcap}$ in \eqref{mab:designer reward} represents the channel utilization, i.e.,  the ratio of total transmission to channel capacity. 
Finally, the negative terms in \eqref{mab:designer reward} penalize channel capacity violations and full buffers. 

The reward function for agent $i$ is
\begin{align}
r_t^i(B_t^i, U_t^{1,2})&:= U_t^i I_{succ} - \kappa (1-I_{succ})-\lambda^i I_{bound}^i,
\end{align}
where $\lambda^i \geq 0$. Agent $i$ is rewarded for the number of packets it successfully transmits and penalized for capacity violations and a full buffer.

As in Problem \ref{prob:no comp markov}, the designer seeks a messaging strategy that maximizes its total expected reward subject to  $SR$ conditions. To establish a baseline for comparison, we consider a Constrained Markov Decision Process (CMDP) formulation. In the CMDP formulation, the designer can directly control agents' actions with the constraint that each agent's total expected reward is above a threshold. We formalize this below.  

\emph{CMDP formulation:} At each time $t$, the designer directly generates an  \emph{an action pair} $(U_t^1, U_t^2) \in \mathcal{U}_t^1 \times \mathcal{U}_t^2$ for the two agents by sampling from a probability distribution $D^d_t$ on $\mathcal{U}_t^1 \times \mathcal{U}_t^2$. This distribution is chosen by the designer as a function of $X_t$, i.e.,
\begin{equation}\label{eq:messaging strategy2}
  (U^1_t,U^2_t) \sim D_t^d, \text{~~and~~} D_t^d = g_t^d(X_t).
\end{equation}
We refer to $g^d$ as the designer's action strategy and we denote by $\mathcal{G}^d$  the set of all possible strategies for the designer.  The total expected reward for the designer and the two agents are defined analogously to \eqref{eq:designer reward-to-go} and \eqref{eq:agent reward-to-go} as $J^i(g^d) = \mathbb{E}^{g^d}\left[\sum_{t=1}^T r_t^i(X_t, U_t^1, U_t^2)\right]$, for $i=0,1,2$.

The CMDP objective is to find an optimal designer action strategy $g^d$ that maximizes the designer's total expected reward while ensuring that each agent $i$ achieves a minimum total expected reward threshold $\epsilon^i$:
 \begin{align*}
    &\max_{g^d \in \mathcal{G}^d} J^0(g^d)\hspace{10pt}\text{s.t. }J^i(g^d) \geq \epsilon^i,\:i=1,2.
    \end{align*}
Note that as $\epsilon^i\rightarrow -\infty$ for both $i=1,2$, the constraints become trivially satisfied, reducing the CMDP problem above to a standard MDP solvable via backward induction. We denote this specific case as unconstrained designer (UD).

\subsection{Numerical Results}
\subsubsection{Static Case} Consider $T=1$ and $X_1 = (b^1, b^2)$ with probability $1$ where $0 \leq b^2 < b^1 < \bufcap$ and $b^1 + b^2 \leq \budcap$.  Let $\alpha$ in \eqref{mab:designer reward} satisfies 
\begin{equation}\label{mab:alpharegime}
\alpha > \frac{b^1 - b^2}{\budcap\left[1 - \phi(b^1, b^2)\right]}.
\end{equation}
In this case,  the optimal solution to Problem~\ref{prob:no comp markov} can be found analytically. The optimal messaging strategy is $g_1^m(b^1, b^2 | X_1) = 1$ and the optimal designer total expected reward is $(b^1 + b^2)/\budcap + \alpha \phi(b^1, b^2)$.

In the unconstrained designer (UD) problem, the strategy $g_1^d(b^1, b^2 | X_1) = 1$ is suboptimal because it is outperformed by the strategy $g_1^d(b^2, b^2 | X_1) = 1$ which yields designer's objective of $2b^2/\budcap + \alpha$. 
Under \eqref{mab:alpharegime}, it can be easily verified that $2b^2/\budcap + \alpha$ is strictly greater than $(b^1 + b^2)/\budcap + \alpha \phi(b^1, b^2)$. 
\emph{Example:} Let $\budcap = 3, \bufcap = 3, \kappa = 0.1,  \lambda^i = 0.1,i\in\{0,1,2\}$ and $X_1 = (2,1)$. For $\alpha = 4$ (which satisfies \eqref{mab:alpharegime}), the optimal solution for the UD is achieved at $g_1^d(1,1|2,1) = 1$, whereas Problem~\ref{prob:no comp markov} yields $g_1^m(2,1 | 2,1) = 1$. For $\alpha = 3$ (which does not satisfy \eqref{mab:alpharegime}), the two formulations converge to the same optimal strategy $g_1^m(2,1 | 2,1) = g_1^d(2,1 | 2,1) = 1$.

\subsubsection{Dynamic Case} We now set our time horizon to $T=10$ and assume a uniform initial state distribution.  
We set the parameters as follows: $\bufcap = 3, p_1 = p_2 = 0.8, \kappa = 0.1,  \lambda^i = 0.1,i=\{0,1,2\}, \alpha = 4$.
\begin{itemize}
    \item \textbf{Emergence of Mixed Strategies $(\budcap = 3)$:} The optimal strategy in the state $(2,2)$ for the UD is $g_t^d(1,1|2,2) = 1$ for all $t\in\{1,\ldots,10\}$. Conversely, the incentive-constrained designer employs mixed strategies. For instance, given state $(2,2)$ at $t=1$, the optimal strategy assigns probability $0.98$ to the perfectly fair pair $(1,1)$ , and a combined probability $0.02$ to the asymmetric pairs $\{(1,2), (2,1)\}$. As time progresses, the combined probability of recommending the pairs $\{(1,2), (2,1)\}$ in the state $(2,2)$ monotonically increases to $0.64$. This shows that the designer must send asymmetric, unfair pairs to satisfy the $SR$ constraints.
    \item \textbf{Comparison with CMDP $(\budcap = 3)$:} For the CMDP formulation, we use the agents' total expected rewards obtained by the designer's messaging strategy in Problem \ref{prob:no comp markov} as the lower-bound thresholds $\epsilon^i$ in the CMDP. For the UD formulation, we solve the CMDP formulation with $\epsilon^i = -10^{6}$.
    \begin{table}[ht]
    \centering
    \caption{Comparison of UD, CMDP and Problem~\ref{prob:no comp markov}}
    \label{table:reward_comparison}
        \begin{threeparttable}
            \begin{tabular}{|l c c c|}
            \hline
            \textbf{Metric} & \textbf{UD} & \textbf{CMDP} & \textbf{Problem~\ref{prob:no comp markov}} \\ \hline
            Exp. Reward (Designer) & 44.8288 & 43.5472 & 43.3264 \\
            Exp. Reward (Agent 1)  & 8.0065  & 8.6653  & 8.6653  \\
            Exp. Reward (Agent 2)  & 8.0082  & 8.6653  & 8.6653  \\ \hline
            \end{tabular}
        \end{threeparttable}
    \end{table}

As Table \ref{table:reward_comparison} shows, the UD solely maximizes its own objective at the agents' expense. The CMDP formulation yields a slightly higher total expected reward for the designer than in our Problem~\ref{prob:no comp markov}. This happens since the constraints in the CMDP formulation only concern agents' total expected rewards, while the $SR$ constraints must be satisfied for each time and each information realization. This performance gap quantifies the cost of sequential rationality.

 \item \textbf{Over-Capacity Recommendations $(\budcap = 1)$:} We reveal a counter-intuitive phenomenon by reducing the channel capacity to $1$. At $t=7$ in state $(2,2)$, the UD recommends $(0,0)$ to strictly respect the channel capacity while maximizing the fairness metric. In stark contrast, the incentive-constrained designer recommends $(1,1)$, deliberately inducing a capacity violation. This occurs because $(0,0)$ violates the $SR$ conditions. Specifically, when evaluating the $SR$ constraints under the strategy $g_7^m(0,0 | 2,2) = 1$, each agent has a profitable unilateral deviation to action $1$. 
\end{itemize}

\section{Conclusion}\label{sec:conc}
We considered a dynamic action recommendation problem 
 where the designer seeks to coordinate the agents to follow obedient strategies. Assuming complete symmetric information of the designer and the agents, we provided an algorithm for finding an optimal action recommendation/messaging strategy for the designer that optimizes their objective while guaranteeing that obedient strategies are sequential rational for the agents. Our algorithm requires solving a sequence of nested linear programs via backward induction.
 Future work will explore settings with asymmetric information by incentivizing agents through information design and dynamic transfers/compensations.

\appendices
\section{Proof of Lemma \ref{lemma:sr equiv no comp}}\label{app:sr equiv no comp}
\emph{Sufficiency:} We  first show  that for agent $i$ under the condition described in Lemma \ref{lemma:sr equiv no comp}, we can establish \eqref{eq:sr} of Definition \ref{def:sequential rationality}. We first consider time $t=T$ and any $c_T = (x_T,\cdot)\in \mathcal{C}_T$. In this case, the right hand side of \eqref{eq:sr} can be written as (we omit the superscript $(g^m, g^i, \obe^{j})_T$ in some of the expectations below for convenience)
    \begin{align}
    &\mathbb{E}^{(g^m, g^i, \obe^j)_T}\left[ r_T^i(x_T, U_T^{j,i}) | c_T\right] \notag \\
    &= \mathbb{E}\Big[ \mathbb{E}[r_T^i(x_T, U_T^{j,i})|c_T, M_T^i] \Bigm| c_T\Big] \notag \\
    &=\sum_{m_T^i} \Big[ g^m_T(m_T^i|x_T)\times \mathbb{E}^{\mu_T^i(\cdot|m^i_T, x_T)}[r_T^i(x_T, M_T^j, u^i_T)] \Big], \notag \\
    &\le\sum_{m_T^i} \Big[ g^m_T(m_T^i|x_T)\times \mathbb{E}^{\mu_T^i(\cdot|m^i_T, x_T)}[r_T^i(x_T,M_T^j, m_T^i)] \Big],\label{eq:lemma1eq1} 
\end{align}
where $u^i_T = g^i_T(m_T^i,c_T)$. The second equality holds due to conditional independence. Since $c_T$ and $m_T^i$ are fixed in the inner conditioning, the full-information-dependent action $u_T^i$ acts as a deterministic constant. Furthermore, because the designer's messaging strategy $g_T^m$ is Markovian, $M_T^j$ is conditionally independent of the history given $x_T$ and $m_T^i$. Therefore, $\mathbb{P}(M_T^j = m_T^j | m_T^i, c_T) = \mu_T^j(m_T^j | m_T^i, x_T)$ ($\mu_T^j(m_T^j | m_T^i, x_T)$ can be obtained by summing $\mu_T^j(m_T^j , n_T| m_T^i, x_T)$ over all possible $n_T$), allowing us to evaluate the inner expectation exactly using $\mu_T^i(m_T^j | m_T^i, x_T)$. The final inequality follows directly from \eqref{eq:sr equiv no comp}. Repeating the above steps with $\obe^i$ instead of $g^i$ will result in 
\begin{align}
    &\mathbb{E}^{(g^m, \obe^i, \obe^j)_T}\left[ r_T^i(x_T, U_T^{j, i}) | c_T\right] \notag \\
        &=\sum_{m_T^i} \Big[ g^m_T(m_T^i|x_T) 
         \times \mathbb{E}^{\mu_T^i(\cdot|m^i_T, x_T)}[r_T^i(x_T,M_T^j, m_T^i)] \Big]  \label{eq:lemma1eq2} 
\end{align}
\eqref{eq:lemma1eq1} and \eqref{eq:lemma1eq2} establish \eqref{eq:sr} of Definition \ref{def:sequential rationality} for time $T$. Further, using the definition of $\mu_T^i$ from \eqref{eq:mu}, the final expression of right hand side in \eqref{eq:lemma1eq2} can be written as
\begin{align}\label{eq:app1eq8 T}
    &\sum_{m_T^i, m_T^j} g_T^m(m_T^j, m_T^i | x_T)r_T^i(x_T,m_T^j, m_T^i) = W_T^i(x_T)
\end{align}
The expression in \eqref{eq:app1eq8 T} is identical to the definition of $W^i_t(x_t)$ in \eqref{eq:agent val func} when $t=T$. We can now proceed inductively. Assume that (i) \eqref{eq:sr} holds for $t+1$ and each $c_{t+1}$, and (ii) $W_{t+1}^i(x_{t+1})$ is equal to the left hand side of \eqref{eq:sr} for each $c_{t+1}=(x_{t+1} , \cdot)$.
 Consider time $t$ and any $c_t = (x_{t} , \cdot) \in \mathcal{C}_t$. In this case, the right hand side of \eqref{eq:sr} can be written as 
 (we omit the superscript $(g^m, g^i, \obe^j)_{t:T}$ in the final step for convenience)
  \begin{align}
    &\mathbb{E}^{(g^m, g^i, \obe^j)_{t:T}}\Bigg[ \sum_{k=t}^T r_k^i(X_k, U_k^{j,i}) \Bigm| c_t\Bigg] \notag \\
    &= \mathbb{E}\left[r_t^i(x_t, U_t^{j,i}) \Bigm| c_t \right]\notag\\
    &+ \mathbb{E}\left[\mathbb{E}^{(g^m, g^i, \obe^j)_{t+1:T}}\left[ \sum_{k=t+1}^T r_k^i(X_k, U_k^{j,i})\Bigm| C_{t+1}\right]  \Bigm| c_t\right] \notag \\
    &\le \mathbb{E}\Big[r_t^i(x_t, U_t^{j,i})  + W^i_{t+1}(f_t(x_t, U_t^{j,i}, N_t)) \Bigm| c_t\Big] \label{eq:lemma1eq3}
\end{align}
where we used the induction hypothesis in the inequality above. For notational convenience, let 
\begin{align}
    &F^i(x_t, u^{j,i}_t, n_t) = r_t^i(x_t, u_t^{j, i}) + W^i_{t+1}(f_t(x_t, u_t^{j, i}, n_t)).
\end{align}
Then, the right hand side of \eqref{eq:lemma1eq3} can be written as 
\begin{align}
   & \mathbb{E}^{(g^m, g^i, \obe^j)_{t:T}}\left[ F^i(x_t, U^{j,i}_t, N_t)  | c_t\right] \notag \\
    &=\mathbb{E}\Big[ \mathbb{E}[F^i(x_t, M^j_t,g^i_t(M^i_t,c_t),N_t) |c_t,M_t^i] \Bigm| c_t\Big] \notag \\
    &=\sum_{m_t^i} \Big[ g^m_t(m_t^i|x_t)\times \mathbb{E}^{\mu_t^i(\cdot|m^i_t,x_t)}[F^i(x_t, M^j_t, u^i_t,N_t) ] \Big] \label{eq:lemma1eq4} \\
    &\le\sum_{m_t^i} \Big[ g^m_t(m_t^i|x_t)\times \mathbb{E}^{\mu_t^i(\cdot|m^i_t,x_t)}[F^i(x_t, M^j_t, m^i_t,N_t) ] \Big] \label{eq:lemma1eq5}
\end{align}
where $u^i_t = g^i_t(m_t^i,c_t)$. The second equality holds due to conditional independence. Since $c_t$ and $m_t^i$ are fixed in the inner conditioning, the full-information-dependent action $u_t^i$ acts as a deterministic constant. Furthermore, because the designer's messaging strategy $g_t^m$ is Markovian, $M_t^j, N_t$ are conditionally independent of the history given $x_t$ and $m_t^i$. Hence, $\mathbb{P}(M_t^j = m_t^j, N_t = n | m_t^i, c_t) = \mu_t^j(m_t^j, n | m_t^i, x_t)$, allowing us to evaluate the inner expectation exactly using $\mu_t^i(m_t^j, n | m_t^i, x_t)$. The final inequality follows from \eqref{eq:sr equiv no comp} and the definition of $F^i$ in the last inequality. Combining \eqref{eq:lemma1eq3} and \eqref{eq:lemma1eq5}, we obtain
\begin{align}
 &\mathbb{E}^{(g^m, g^i, \obe^j)_{t:T}}\Bigg[ \sum_{k=t}^T r_k^i(X_k, U_k^{j,i}) \Bigm| c_t\Bigg] \notag \\
    &\le\sum_{m_t^i} \Big[ g^m_t(m_t^i|x_t)\times \mathbb{E}^{\mu_t^i(\cdot|m^i_t,x_t)}[F^i(x_t, M^j_t, m^i_t,N_t) ] \Big] \label{eq:lemma1eq6}
\end{align}
Repeating the above steps with $\obe^i$ instead of $g^i$ will result in
\begin{align}
     &\mathbb{E}^{(g^m, \obe^i, \obe^j)_{t:T}}\Bigg[ \sum_{k=t}^T r_k^i(X_k, U_k^{j,i}) \Bigm| c_t\Bigg] \notag \\
     &=\sum_{m_t^i} \Big[ g^m_t(m_t^i|x_t)\times \mathbb{E}^{\mu_t^i(\cdot|m^i_t,x_t)}[F^i(x_t, M_t^j, m_t^i, N_t) ] \Big] \label{eq:lemma1eq7}
     \end{align}
\eqref{eq:lemma1eq6} and \eqref{eq:lemma1eq7} establish \eqref{eq:sr} of Definition \ref{def:sequential rationality} for time $t$. 
Further, using the definition of $\mu_t^i$ from \eqref{eq:mu}, the final  expression in \eqref{eq:lemma1eq7} can be written as
\begin{align}
&\sum_{\substack{m_t^i \in \mathcal U_t^i \\ m_t^j \in \mathcal U_t^j, n \in \mathcal{N}_t}}
Q_t(n)g^m_t(m_t^j, m_t^i | x_t)\notag \\
&\times\left[r_t^i(x_t, m_t^j, m_t^i)+  W_{t+1}^i(f_t(x_t, m_t^j, m_t^i, n))\right].
\label{eq:app1eq8}
\end{align}
The expression in \eqref{eq:app1eq8} is identical to the definition of $W^i_t(x_t)$ in \eqref{eq:agent val func}. This completes the induction argument. The same arguments apply for agent $j$. Hence, \eqref{eq:sr} holds for all $t$.

\emph{Necessity:} We provide a proof-by-contradiction argument. Suppose $(\obe^1,\obe^2)$ satisfies $SR(g^m)$ but \eqref{eq:sr equiv no comp} is not true for some agent $i$ and some time $t$, $1 \le t \le T$, and some realizations $m_t^i, c_t = (x_{t} , \cdot)$ with $g^m_t(m_t^i| x_t) > 0$. Let  ${\tau}$ be the largest time index less than or equal to $T$ such that there exists a $\tilde{m}_{\tau}^i, \tilde{c}_{\tau} = (\tilde{x}_{\tau} , \cdot) $ with $g^m_{\tau}(\tilde{m}_{\tau}^i| \tilde{x}_{\tau}) > 0$  where \eqref{eq:sr equiv no comp} is not true. We will show that each possible value of $\tau$ results in a contradiction.

 Suppose that  $\tau = T$. In this case, we will construct an agent strategy $g^i$ such that \eqref{eq:sr} is violated which will contradict the fact that $(\obe^1,\obe^2)$ satisfies $SR(g^m)$.  Consider the agent action strategy $g^i$ that is identical to $\obe^i$ everywhere except for  realization $\tilde{m}_{T}^i$ and $\tilde{c}_{T}=(\tilde{x}_{T},\cdot)$ of the agent's information at  time $T$.
Define $g_T^i(\tilde{m}_{T}^i,\tilde{c}_{T})$ 
as follows
\begin{align}\label{eq:app2eq1}
&g_T^i(\tilde{m}_{T}^i,\tilde{c}_{T})  \in\argmax_{u \in \mathcal{U}_T^i} \mathbb{E}^{\mu^i_T(\cdot| \tilde{m}_{T}^i, \tilde{x}_T)}[r_T^i(\tilde{x}_T,M_T^j, u)]
\end{align}
For the $g^i$ defined above and the information realization $\tilde{c}_T$, the right hand side of \eqref{eq:sr} can be written as 
\begin{align}
    \mathbb{E}&^{(g^m, g^i, \obe^j)_T}\left[ r_T^i(\tilde{x}_T, U_T^{j,i}) | \tilde{c}_T\right] \notag \\
    &=\mathbb{E}\Big[ \mathbb{E}[r_T^i(\tilde{x}_T, M^j_T, g^i_T(M_T^i,\tilde{c}_T)) |\tilde{c}_T,M_T^i] \Bigm| \tilde{c}_T\Big] \notag \\
    &=\sum_{m_T^i} \Big[ g^m_T(m_T^i|\tilde{x}_T)\times \mathbb{E}^{\mu_T^i(\cdot|m^i_T, \tilde{x}_T)}[r_T^i(\tilde{x}_T, M_T^j, \tilde{u}^i_T)] \Big] \notag\\
    &>\sum_{m_T^i} \Big[ g^m_T(m_T^i|\tilde{x}_T)\times \mathbb{E}^{\mu_T^i(\cdot|m^i_T, \tilde{x}_T)}[r_T^i(\tilde{x}_T,M_T^j, m_T^i)] \Big] \label{eq:app2.a} \\
    & = \mathbb{E}^{(g^m, \obe^i, \obe^j)_T}\left[ r_T^i(\tilde{x}_T, U_T^{j,i})|\tilde{c}_T\right]\label{eq:app2.b}
    \end{align}
where $\tilde{u}^i_T = g^i_T(m_T^i,\tilde{c}_T)$. The strict inequality in \eqref{eq:app2.a} holds because $g^i$ and $\obe^i$ are identical everywhere except at the realization $\tilde{c}_{T}, \tilde{m}_{T}^i$ of the agent's information at  time $T$, and for this critical realization we have 
\begin{align}\label{eq:app2eq3}
    &g^m_T(\tilde{m}_T^i|\tilde{x}_T)\times 
        \mathbb{E}^{\mu_T^i(\cdot| \tilde{m}_{T}^i,\tilde{x}_T)}[r_T^i(\tilde{x}_T, M^j_T, \tilde{u}^i_T)]\notag \\
        &\hspace{10pt}> g^m_T(\tilde{m}_T^i|\tilde{x}_T)\times 
        \mathbb{E}^{\mu_T^i(\cdot| \tilde{m}_{T}^i,\tilde{x}_T)}[r_T^i(\tilde{x}_T, M^j_T, \tilde{m}_T^i)]
\end{align}
since $g^m_T(\tilde{m}_T^i|\tilde{x}_T) >0$ and $\tilde m_T^i$ is not an $\argmax$ of the right hand side of \eqref{eq:sr equiv no comp} for time ${T}$ and the given realization $\tilde{c}_{T}, \tilde{m}_{T}^i$. 

Thus, \eqref{eq:app2.b} shows that the strategy $g^i$ constructed above violates \eqref{eq:sr} which contradicts the fact that $\obe^i$ satisfies sequential rationality. Thus, we must have that $\tau \ne T$.

We can now consider other possible values of $\tau$. Suppose that $\tau =l$, where $1 \le l <T$ and we have $\tilde{m}_{l}^i, \tilde{c}_{l} = (\tilde{x}_{l}, \cdot)$ with $g^m_{l}(\tilde{m}_{l}^i| \tilde{x}_{l}) > 0$  where \eqref{eq:sr equiv no comp} is not true. Consider an action strategy $g^i$ that is identical to $\obe^i$ everywhere except for  realization $\tilde{m}_{l}^i$ and $\tilde{c}_{l} = (\tilde{x}_{l}, \cdot)$ of the agent's information at  time $l$. Define $g_l^i(\tilde{m}_{l}^i,\tilde{c}_{l})$ 
as follows
\begin{align}
     &g_l^i(\tilde{m}_{l}^i,\tilde{c}_{l}) \in \argmax_{u \in \mathcal{U}_l^i} \mathbb{E}^{\mu_l^i(\cdot|\tilde{m}_{l}^i,\tilde{x}_{l})}[r_l^i(\tilde{x}_{l}, M_l^j, u) \notag \\
     &\hspace{40pt}+ W^i_{l+1}(f_l(\tilde{x}_{l}, M_l^j, u, N_t))]
\end{align}
For the $g^i$ defined above and the information realization $\tilde{c}_l$, we can follow the steps used in \eqref{eq:lemma1eq3} to write
\begin{align}
    &\mathbb{E}^{(g^m, g^i, \obe^j)_{l:T}}\Bigg[ \sum_{k=l}^T r_k^i(X_k,U_k^{j,i}) \Bigm| \tilde{c}_l\Bigg] \notag \\
    &= \mathbb{E}\left[r_l^i(\tilde{x}_l, U_l^{j,i})| \tilde{c}_l \right]\notag \\
    &+ \mathbb{E}\Bigg[\mathbb{E}^{(g^m, g^i, \obe^j)_{l+1:T}}\Bigg[ \sum_{k=l+1}^T r_k^i(X_k, U_k^{j,i})\Bigm| C_{l+1}\Bigg]  \Bigm| \tilde{c}_l\Bigg] \notag \\
    &= \mathbb{E}\left[r_l^i(\tilde{x}_l, U_l^{j,i}) | \tilde{c}_l \right]\notag \\
    & + \mathbb{E}\Bigg[ \mathbb{E}^{(g^m, \obe^i, \obe^j)_{l+1:T}}\Bigg[ \sum_{k=l+1}^T r_k^i(X_k, U_k^{j,i}) \Bigm| C_{l+1}\Bigg]  \Bigm| \tilde{c}_l\Bigg] \label{eq:app2.d} \\
    &= \mathbb{E}^{(g^m,g^i, \obe^j)_{l:T}}\Big[r_l^i(\tilde{x}_l, U_l^{j, i})\notag\\
    &\hspace{80pt}+ W^i_{l+1}(f_l(\tilde{x}_l, M_l^j, U_l^i, N_l) \Bigm| \tilde{c}_l\Big] \label{eq:app2.c}
\end{align}
where we used the fact that $g^i$ and $\obe^i$ are identical for time $k > l$ in \eqref{eq:app2.d} and the fact proved in sufficiency part that $W^i_{l+1}(x_{l+1})$ is the left hand side of \eqref{eq:sr} for each $c_{l+1} = (x_{l+1}, \cdot)$.
Now, following the steps  used in sufficiency part to get  \eqref{eq:lemma1eq4} from \eqref{eq:lemma1eq3}, \eqref{eq:app2.c} can be written as
\begin{align}\label{eq:app2eq5}
        &\sum_{m_l^i} \Big[ g^m_l(m_l^i|\tilde{x}_l)\times \mathbb{E}^{\mu_l^i(\cdot| m_{l}^i, \tilde{x}_l)}[F^i(\tilde{x}_l,M_l^{j}, g^i_l(m_l^i,\tilde{c}_l), N_l)] \Big]
\end{align}
Using \eqref{eq:app2eq5}  and  arguments similar to those  in \eqref{eq:app2.a} and \eqref{eq:app2.b}, we obtain
\begin{align}
  &\mathbb{E}^{(g^m, g^i, \obe^j)_{l:T}}\Bigg[ \sum_{k=l}^T r_k^i(X_k, U_k^{j,i}) \Bigm| \tilde{c}_l\Bigg]  \notag \\
  &=\sum_{m_l^i} \Big[ g^m_l(m_l^i|\tilde{x}_l)\times \mathbb{E}^{\mu_l^i(\cdot| m_{l}^i,\tilde{x}_l)}[F^i(\tilde{x}_l, M_l^{j}, \tilde{u}^i_l,N_l)] \Big] \notag \\
  &>\sum_{m_l^i, b^i_l} \Big[ g^m_l(m_l^i|\tilde{x}_l)\times \mathbb{E}^{\mu_l^i(\cdot| m_{l}^i,\tilde{x}_l)}[F^i(\tilde{x}_l, M_l^{j}, m_l^i, N_l)] \Big] \notag \\
  &=\mathbb{E}^{(g^m, \obe^i, \obe^j)_{l:T}}\left[ \sum_{k=l}^T r_k^i(X_k, U_k^{j,i}) \Bigm| \tilde{c}_l\right], \label{eq:app2.e}
\end{align}
where $\tilde{u}^i_l = g^i_l(m_l^i,\tilde{c}_l)$.
\eqref{eq:app2.e} shows that the strategy $g^i$ constructed above violates \eqref{eq:sr} which contradicts the fact that $\obe^i$ satisfies sequential rationality. Thus, we must have that $\tau \ne l$. 

The above argument shows that $\tau$ cannot take any value in $\{T,T-1,\ldots,1\}$. Therefore, \eqref{eq:sr equiv no comp} must hold for each agent $i$, each $t$ and each $c_t, m_t^i$ with $g^m_t(m_t^i| x_t) > 0$.

\section{Proof of Theorem \ref{thm:alg1}}\label{appendix: proofofalgorithm}
We will show that the strategy obtained from Algorithm \ref{alg:1} is an optimal solution to the Global Problem. 

Let $g^{m}_{1:T}, V_{1:T}, W_{1:T}^{1,2}$ be obtained using the sequence of linear programs in Algorithm \ref{alg:1}. That is, for each $t$ and each $x_t$, $(g^{m}_t(\cdot|x_t),W_t^{1,2}(x_t), V_t(x_t))$ is an optimal solution for $\mathbf{LP}_t(x_t)$. It is straightforward to verify that the  $g^{m}_{1:T},  W_{1:T}^{1,2}$ obtained from Algorithm \ref{alg:1} form a feasible solution of the Global Problem since they satisfy all its constraints.  

Let $(g^{m,global}, W^{1,global}_{1:T}, W^{2,global}_{1:T})$ be any  feasible solution for the Global Problem.
Let $g^{global}$ denote the strategy profile $(g^{m,global}, \obe^1, \obe^2)$ and define the reward-to-go functions for the designer under the strategy profile $g^{global}$ as
\allowdisplaybreaks
    \begin{align*}
        V^{global}_t(x_t) := &          \mathbb{E}^{g^{global}_{t:T}}\Bigg[\sum_{k=t}^T r_k^0(X_k, U_k^{1,2}) \Big| X_t = x_t\Bigg].
    \end{align*}
    Note that the objective value of the Global Problem under $(g^{m,global}, W^{1,global}_{1:T}, W^{2,global}_{1:T})$ can be written as
   \begin{equation}
      J^0(g^{m,global},\obe^1,\obe^2) =  \mathbb{E}[V^{global}_1(X_1)].
   \end{equation}  
    We want to show that for all $t=T, T-1, \ldots,1$ and for each $x_t \in \mathcal{X}_t$, we have $V^{global}_t(x_t) \leq V_t(x_t)$. In other words, the value functions $V_t(\cdot)$ obtained from Algorithm \ref{alg:1} dominate the designer's reward-to-go functions $V^{global}_t(\cdot)$ for any feasible solution of the Global Problem.
    
    \emph{Base case  $(t = T)$:} Fix a $x_T \in \mathcal{X}_T$.  Then
    \begin{align}
         &V^{global}_T(x_T) = \mathbb{E}^{g^{global}_T}\left[r_T^0(X_T, U_T^{1,2}) | X_T = x_T\right] \notag \\
         &=\sum_{\substack{m^1 \in \mathcal U_T^1, m^2 \in \mathcal U_T^2}}
g^{m,global}_T(m^{1, 2} | x_T) \times r_T^0(x_T,m^{1,2})\notag\\
         &=\sum_{\substack{m^1 \in \mathcal U_T^1\\ m^2 \in \mathcal U_T^2, n \in \mathcal{N}_T}}
Q_T(n)g^{m,global}_T(m^{1, 2} | x_T)\times r_T^0(x_T,m^{1,2})
    \end{align}
    
    It is now easy to verify that the tuple $(g^{m,global}_T(\cdot|x_T),$ $ W^{1,global}_T(x_T), W^{2,global}_T(x_T), V^{global}_T(x_T))$ is a feasible solution for $\mathbf{LP}_T(x_T)$. Thus, it follows that
    \begin{equation}
        V^{global}_T(x_T) \le V_T(x_T) \label{eq:app3.b}
    \end{equation} 
    since $V_T(x_T)$ comes from the optimal solution for $\mathbf{LP}_T(x_T)$.

    \emph{Induction step:} Now suppose that  $V^{global}_t(\cdot) \leq V_{t}(\cdot)$ holds for all $t \ge l+1$.  Fix a $c_l = (x_l,\cdot)$, we obtain
    \begin{align}
        &V^{global}_l(x_l) :=           \mathbb{E}^{g^{global}_{l:T}}\left[\sum_{k=l}^T r_k^0(X_k, U_k^{1,2}) \Big| X_l = x_l\right] \notag\\
        &= \mathbb{E}^{g^{global}_{l:T}}\Big[r_l^0(x_l, U_l^{1,2}) + V^{global}_{l+1}(X_{l+1}) \Big| X_l = x_l\Big] \notag \\
        &=\sum_{\substack{m^1 \in \mathcal U_l^1\\  m^2 \in \mathcal U_l^2, n \in \mathcal{N}_l}}
 Q_l(n)g_l^{m,global}(m^1, m^2 | x_l) \notag \\
        &\hspace{10pt} \times\left[r_l^0(x_l, m^1, m^2) + V^{global}_{l+1}(f_l(x_l,m^{1,2}, n))\right].\label{eq:app3.a}
            \end{align}
     By the induction hypothesis, $V^{global}_{l+1}(\cdot) \le V_{l+1}(\cdot)$. Hence  the expression in \eqref{eq:app3.a} is upper bounded as follows    
\begin{align}
 \eqref{eq:app3.a}   &\le \sum_{\substack{m^1 \in \mathcal U_l^1\\ m^2 \in \mathcal U_l^2, n \in \mathcal{N}_l}}
 Q_l(n)g_l^{m,global}(m^1, m^2 | x_l) \notag \\
        &\hspace{8pt} \times\left[r_l^0(x_l, m^1, m^2) + V_{l+1}(f_t(x_l,m^1, m^2, n))\right]  \label{eq:app3eq1} \\
&:= \hat{V}_l(x_l).
\end{align} 
It is now easy to verify that the tuple $(g^{m,global}_l(\cdot|x_l),$ $ W^{1,global}_l(x_l), W^{2,global}_l(x_l), \hat{V}_l(x_l))$ is a feasible solution for $\mathbf{LP}_l(x_l)$. Thus, it follows that $\hat{V}_l(x_l) \le V_l(x_l)$ (since $V_l(x_l)$ comes from the optimal solution for $\mathbf{LP}_l(x_l)$).  Combining this with \eqref{eq:app3eq1}, we get $V^{global}_l(x_l) \le \hat{V}_l(x_l) \le V_l(x_l)$. This completes the induction argument.

Hence,  at time $1$, we have that $V^{global}_{1}(\cdot) \leq V_{1}(\cdot)$.
    Therefore, the objective value of the Global Problem under $(g^{m,global}, W^{1,global}_{1:T}, W^{2,global}_{1:T})$, which is equal to $\mathbb{E}[V^{global}_1(X_1)] $, satisfies 
    \begin{align}
     J^0(g^{m,global},\obe^1,\obe^2) =    \mathbb{E}[V^{global}_1(X_1)] \le \mathbb{E}[V_1(X_1)]. \label{eq:app3eq2}
    \end{align} 
    Repeating the above arguments with $g^{m}_{1:T}, W_{1:T}^{1,2}$ obtained from Algorithm \ref{alg:1} instead of  $(g^{m,global}, W^{1,global}_{1:T}, W^{2,global}_{1:T})$ will change all inequalities to equalities. 
    In particular, we will get
    \begin{align}
     J^0(g^{m},\obe^1,\obe^2) =    \mathbb{E}[V_1(X_1)]. \label{eq:app3eq2a}
    \end{align} 
 Comparing \eqref{eq:app3eq2} and \eqref{eq:app3eq2a}, it is clear that the Markovian strategy $g^{m}$ obtained from Algorithm \ref{alg:1} is optimal for the Global Problem and hence for Problem~\ref{prob:no comp markov}. 

 \section{Proof of Theorem~\ref{thm:three}}\label{appendix:proofofthmthree}
    First consider time $t=T$ and the linear programs \( \mathbf{LP}_T(x_T,\tilde{c}_T) \) and \( \mathbf{LP}_T(x_T) \). The terms multiplying $g^m_T(\cdot)$ in \eqref{eq:lp expanded}, \eqref{eq:W_def expanded}, \eqref{eq:V_def expanded}  are identical to those in \eqref{eq:lp3}, \eqref{eq:agent val func}, \eqref{eq:designer val func}. Thus, the  two linear programs are identical and share an optimal solution. Consequently, the optimal  value functions can be sufficiently expressed by the state $x_T$ alone. This establishes that $(\hat{g}^m_T(\cdot|x_T, \tilde{c}_T), \hat V_T(x_T, \tilde{c}_T), \hat W^{1,2}_T(x_T, \tilde{c}_T) )$ optimally solves $\mathbf{LP}_T(x_T,\tilde{c}_T)$. 
    We can now proceed inductively. Assume that $(\hat{g}^m_{t+1}(\cdot|x_{t+1}, \tilde{c}_{t+1}), \hat V_{t+1}(x_{t+1}, \tilde{c}_{t+1}), \hat W^{1,2}_{t+1}(x_{t+1}, \tilde{c}_{t+1}) )$ is an optimal solution of $\mathbf{LP}_{t+1}(x_{t+1},\tilde{c}_{t+1})$.   We can proceed as we did at time $T$: the terms multiplying $g^m_t(\cdot)$ in \eqref{eq:lp expanded}, \eqref{eq:W_def expanded}, \eqref{eq:V_def expanded}  are identical to those in \eqref{eq:lp3}, \eqref{eq:agent val func}, \eqref{eq:designer val func}. Thus, the linear programs \( \mathbf{LP}_t(x_t,\tilde{c}_t) \) and \( \mathbf{LP}_t(x_t) \) share an optimal solution and the optimal  value functions can be sufficiently expressed by the state $x_t$. This demonstrates that  $(\hat{g}^m_t(\cdot|x_t, \tilde{c}_t), \hat V_t(x_t, \tilde{c}_t), \hat W^{1,2}_t(x_t, \tilde{c}_t) )$ is an optimal solution of $\mathbf{LP}_t(x_t,\tilde{c}_t)$, completing the induction argument. 
\bibliography{ref_journal_2025.bib}
\end{document}